\def\BState{\State\hskip-\ALG@thistlm}
\newcommand{\REMOVE}[1]{}%\textcolor[rgb]{.8,0,0}{Remove for short version: #1}}
\newcommand{\eps}{\varepsilon}
\renewcommand{\phi}{\varphi}
\newcommand{\Part}{\text{Part}}
\newcommand{\Bal}{\textsf{Bal}}
\newcommand{\distr}{district}
\newcommand{\Distr}{District}
\newtheorem{theorem}{Theorem}
\newtheorem{lemma}[theorem]{Lemma}
\newtheorem{corollary}[theorem]{Corollary}
\renewcommand{\emph}{\textbf}
\newcounter{section-preserve}
\newcounter{theorem-preserve}
\newcounter{lemma-preserve}
\newcommand{\blank}[1]{}
\newtoks\magicAppendix
\newtoks\magictoks
\newif\iflater
\long\def\later#1{\magictoks={#1}%
 \edef\magictodo{\noexpand\magicAppendix={\the\magicAppendix \par
   \the\magictoks%
 }}
 \magictodo}
\long\def\both#1{\magictoks={#1}%
 \edef\magictodo{\noexpand\magicAppendix={\the\magicAppendix \par
   \noexpand\setcounter{theorem-preserve}{\noexpand\arabic{theorem}}%
   \noexpand\setcounter{theorem}{\arabic{theorem}}%
   \noexpand\setcounter{lemma-preserve}{\noexpand\arabic{lemma}}%
   \noexpand\setcounter{lemma}{\arabic{lemma}}%
   \noexpand\setcounter{section-preserve}{\noexpand\arabic{section}}%
   \noexpand\setcounter{section}{\arabic{section}}%
   \noexpand\let\noexpand\oldsection=\noexpand\thesection
   \noexpand\def\noexpand\thesection{\thesection}
   \noexpand\let\noexpand\oldlabel=\noexpand\label
   \noexpand\let\noexpand\label=\noexpand\blank
   \the\magictoks%
   \noexpand\setcounter{theorem}{\noexpand\arabic{theorem-preserve}}%
   \noexpand\setcounter{lemma}{\noexpand\arabic{lemma-preserve}}%
   \noexpand\setcounter{section}{\noexpand\arabic{section-preserve}}%
   \noexpand\let\noexpand\thesection=\noexpand\oldsection
   \noexpand\let\noexpand\label=\noexpand\oldlabel
 }}
 \magictodo
 \the\magictoks}
\def\magicappendix{\latertrue \the\magicAppendix}
\newcommand{\proofsketch}{\vspace*{-1ex} \noindent \textit{Proof Sketch. }}
\newcommand{\both}[1]{#1}
\newcommand{\later}[1]{#1}
\title{Reconfiguration of Connected Graph Partitions via Recombination\thanks{Supported by NSF CCF-1422311, CCF-1423615, DMS-1800734, and NSERC.}}
\title{Reconfiguration of Connected Graph Partitions via Recombination\thanks{Research supported in part by NFS awards CCF-1422311, CCF-1423615, DMS-1800734, and by NSERC.}}
\author{
Hugo A. Akitaya\inst{1}
\and
Matias Korman\inst{2}
\and
Oliver Korten\inst{3}
\and\\
Diane L. Souvaine\inst{2}
\and
Csaba D. T\'oth\inst{2,4}
}
\institute{School of Computer Science, Carleton University, Ottawa, ON, Canada. \and
Department of Computer Science, Tufts University, Medford, MA, USA. \and
Department of Computer Science, Columbia University, New York, NY, USA. \and
Department of Mathematics, Cal State Northridge, Los Angeles, CA, USA.}
\author{
Hugo A. Akitaya\thanks{School of Computer Science, Carleton University, Ottawa, ON, Canada.}
\and
Matias Korman\thanks{Department of Computer Science, Tufts University, Medford, MA, USA.}
\and
Oliver Korten\thanks{Department of Computer Science, Columbia University, New York, NY, USA.}
\and
Diane L. Souvaine\footnotemark[3]
\and
Csaba D. T\'oth~\thanks{Department of Mathematics, 
California State University Northridge, Los Angeles, CA, USA.}~\footnotemark[3]
}
\begin{document}
	
\maketitle
\ifthenelse{\boolean{lncs}}{
\linenumbers
}{}

\begin{abstract}
Motivated by applications in gerrymandering detection, we study a reconfiguration problem on connected partitions of a connected graph $G$.
A partition of $V(G)$ is \emph{connected} if every part induces a connected subgraph.
In many applications, it is desirable to obtain parts of roughly the same size, possibly with some slack $s$.
A \emph{Balanced Connected $k$-Partition with slack $s$}, denoted \emph{$(k,s)$-BCP}, is a partition of $V(G)$ into $k$ nonempty subsets, of sizes $n_1,\ldots , n_k$ with $|n_i-n/k|\leq s$, each of which induces a connected subgraph (when $s=0$, the $k$ parts are perfectly balanced, and we call it \emph{$k$-BCP} for short).

A \emph{recombination} is an operation that takes a $(k,s)$-BCP of a graph $G$ and produces another by merging two adjacent subgraphs and repartitioning them. Given two $k$-BCPs, $A$ and $B$, of $G$ and a slack $s\geq 0$, we wish to determine whether there exists a sequence of recombinations that transform $A$ into $B$ via $(k,s)$-BCPs. We obtain four results related to this problem:
(1) When $s$ is unbounded, the transformation is always possible using at most $6(k-1)$ recombinations.
(2) If $G$ is Hamiltonian, the transformation is possible using $O(kn)$ recombinations for any $s \ge n/k$, and
(3) we provide negative instances for $s \leq n/(3k)$.
(4) We show that the problem is PSPACE-complete when $k \in O(n^{\eps})$ and $s \in O(n^{1-\eps})$, for any constant $0 < \eps \le 1$, even for restricted settings such as when $G$ is an edge-maximal planar graph or when $k=3$ and $G$ is planar.
\end{abstract}

\section{Introduction}
\label{sec:intro}

Partitioning the vertex set of a graph $G=(V,E)$ into $k$ nonempty subsets $V=\bigcup_{i=1}^kV_i$ that each induces a connected graph $G[V_i]$ is a classical problem, known as the \emph{Connected Graph Partition} problem~\cite{Gyori76,Lovasz77}. Motivated by fault-tolerant network design and facility location problems, it is part of a broader family of problems where each induced graph $G[V_i]$ must have a certain graph property (e.g., $\ell$-connected or $H$-minor-free). In some instances,
it is desirable that the parts $V_1,\ldots , V_k$ have the approximately the same size (depending on some pre-established threshold).
A \emph{Balanced Connected $k$-Partition} (for short, \emph{$k$-BCP}) is a connected partition requiring that $|V_i|=n/k$, for $i \in \{1, \ldots, k\}$ where $n=|V(G)|$ is the total number of vertices. Dyer and Frieze~\cite{DyerF85} proved that finding a $k$-BCP is NP-hard for all $2\leq k\leq n/3$. For $k=2,3$ the problem can be solved efficiently when $G$ is bi- or triconnected, respectively~\cite{SuzukiTN90,WadaK93}, and is equivalent to the perfect matching problem for $k=n/2$. 
Later Chleb{\'{i}}kov{\'{a}}~\cite{Chlebikova96} and Chataigner et al.~\cite{WakabayashiCS07} obtained approximation and inapproximability results for maximizing the ``balance'' ratio $\max_i|V_i|/\min_j|V_j|$ over all connected $k$-partitions. See also~\cite{ItoZN06,LariRPS16,SoltanYZ20,WuZW16} for variants under various other optimization criteria.

In this paper, our basic element is a connected $k$-partition of a graph $G=(V,E)$ that is balanced up to some additive threshold that we call a \emph{slack} $s\geq 0$, denoted \emph{$(k,s)$-BCP}. We explore the space of all \emph{$(k,s)$-BCP}s of the graph $G=(V,E)$.  Note that the total number of \emph{$(k,s)$-BCP}s for all $s\geq 0$, is bounded above by the number $k$-partitions of $V$, which is the Stirling number of the second kind $S(n,k)$, and asymptotically equals $(1+o(1))k^n/k!$ for constant $k$. This bound is the best possible for the complete graph $G=K_n$.

In a recent application~\cite{AGR+19,duchin2018gerrymandering,NDS19}, $G=(V,E)$ represents the adjacency graph of precincts in an electoral map, which should be partitioned into $k$ {\distr s} $V_1,\ldots ,V_k$ where each {\distr} will elect one representative.
%\TODO{Expand on this theme.}
Motivated by the design and evaluation of electoral maps under optimization criteria designed to promote electoral fairness, practitioners developed empirical methods to sample the configuration space of potential{\distr} maps by a random walk on the graph where each step corresponds to some elementary \emph{reconfiguration move}~\cite{PrincetonStudy}.
From a theoretical perspective, the stochastic process converges to uniform sampling~\cite{JerrumVV86,LevinPeres17}. However, the move should be \emph{local} to allow efficient computation of each move, and it should support rapid mixing (i.e., the random walk should converge, in total variation distance, to its steady state distribution in time polynomial in $n$).
Crucially, the space of (approximately balanced) $k$-partitions of $G$ must be \emph{connected} under the proposed move. Previous research considered the \emph{single switch} move, in which a single vertex $v\in V$ switches from one set $V_i$ to another set $V_j$ (assuming that both $G[V_i]$ and $G[V_j]$ remain connected).  Akitaya et al.~\cite{akitaya2019reconfiguration} proved that the configuration space is connected under single switch moves if $G$ is biconnected, but in general it is NP-hard both to decide whether the space is connected and to find a shortest path between two valid $k$-partitions. While the single switch is local, both worst-case constructions
and empirical evidence~\cite{DDS19,NDS19} indicate that it does not support rapid mixing. 
%\TODO{As stated above, if we are going to refer to rapid mixing, we must define.}
%\csaba{This is the last reference to rapid mixing. We don't know whether a random walk with recombination moves mixes rapidly. Rapid mixing is difficult to prove, perhaps we should mention this as an open problem in the Conclusions.}

In this paper we consider a different move. Specifically, we consider the configuration space of $k$-partitions under the \emph{recombination} move, proposed by DeFord et al.~\cite{DDS19}, in which the vertices in $V_i\cup V_j$, for some $i,j\in \{1,\ldots , k\}$, are re-partitioned into $V_i'\cup V_j'$ such that both $G[V_i']$ and $G[V_j']$ are connected. We also study variants restricted to balanced or near-balanced partitions, that is, when $|V_i|=n/k$ for all $i\in \{1,\ldots ,k\}$, or when $\big| |V_i|-n/k\big|\leq s$ for a given slack $s\geq 0$. In application domains mentioned above, the underlying graph $G$ is often planar or near-planar, and in some cases it is a triangulation (i.e., an edge-maximal planar graph). Results pertaining to these special cases are of particular interest.

\paragraph{Definitions.}
Let $G=(V,E)$ be a graph with $n=|V(G)|$. For a positive integer $k$,
a \emph{connected $k$-partition} $\Pi$ of $G$ is a partition of $V(G)$ into disjoint nonempty subsets $\{V_1,\ldots,V_k\}$ such that
the induced subgraph $G[V_i]$ is connected for all $i\in\{1,\ldots,k\}$.
Each subgraph induced by $V_i$ is called a \emph{{\distr}}.
We write $\Pi(v)$ for the subset in $\Pi$ that contains vertex $v$.

Denote by $\Part(G,k)$ the set of connected $k$-partitions on $G$. We also consider subsets of $\Part(G,k)$ in which all {\distr s} have the same or almost the same number of vertices. A connected $k$-partition of $G$ is \emph{balanced} ($k$-BCP) if every {\distr} has precisely $n/k$ vertices (which implies that $n$ is a multiple of $k$); and it is  \emph{balanced with slack $s\geq 0$} ($(k,s)$-BCP), if $\big| |U|-n/k\big|\leq s$ for every {\distr} $U\subset V$. Let $\Bal_s(G,k)$ denote the set of connected $k$-partitions on $G$ that are balanced with slack $s$, i.e., the set of all $(k,s)$-BCPs. The set of balanced $k$-partitions is denoted $\Bal(G,k)=\Bal_0(G,k)$; and $\Part(G,k)=\Bal_\infty(G,k)$.

We now formally define a \emph{recombination move} as a binary relation on $\Bal_s(G,k)$.
\sloppy Two non-identical $(k,s)$-BCPs, $\Pi_1 = \{V_1, \ldots, V_k\}$ and $\Pi_2 = \{W_1, \ldots, W_k\}$,are related by a recombination move if there exist $i,j\in \{1,\ldots , k\}$,
and a permutation $\pi$ on $\{1,\ldots , k\}$ such that $V_i\cup V_j=W_{\pi(i)}\cup W_{\pi(j)}$
and $V_\ell=W_{\pi(\ell)}$ for all $\ell\in \{1,\ldots , k\}\setminus \{i,j\}$.
We say that $\Pi_1$ and $\Pi_2$ are a recombination of each other.
This binary relation is symmetric and defines a graph on $\Bal_s(G,k)$ for all $s\geq 0$.
This graph is the \emph{configuration space} of $\Bal_s(G,k)$ under recombination,
denoted by $\mathcal{R}_s(G,k)$.

\paragraph{Balanced Recombination Problem  BR$(G,k,s)$:}
Given a graph $G=(V,E)$ with $|V|=n$ vertices and two $(k,s)$-BCPs $A$ and $B$,
decide whether there exists a path between $A$ and $B$ in $\mathcal{R}_s(G,k)$, i.e. whether there is a sequence of recombination moves that carries $A$ to $B$ such that every
intermediate partition is a $(k,s)$-BCP.
%district $U\subset V$ in all intermediate $k$-partitions satisfies $\big| |U|-n/k\big|\leq s$.

\paragraph{Our Results.}
We prove, in Section~\ref{sec:connected}, that the configuration space $\mathcal{R}_\infty(G,k)$ is connected whenever the underlying  graph $G$ is connected and the size of the {\distr s} is unrestricted. It is easy, however, to construct a graph $G$ where $\mathcal{R}_0(G,k)$ is disconnected. We study what is the minimum slack $s$, as a function of $n$ and $k$, that guarantees that $\mathcal{R}_s(G,k)$ is connected for all connected (or possibly biconnected) graphs $G$ with $n$ vertices. We prove that $\mathcal{R}_s(G,k)$ is connected and its diameter is $O(nk)$ for $s=n/k$ when $G$ is a Hamiltonian graph (Section~\ref{sec:hamiltonian}). As a counterpart, we construct a family of Hamiltonian planar graphs $G$ such that $\mathcal{R}_s(G,k)$ is disconnected for $s<n/(3k)$ (Section~\ref{sec:disconnected}).

We prove in Section~\ref{sec:hardness} that BR$(G,k,s)$ is PSPACE-complete even for the special case when $G$ is a triangulation (i.e., an edge-maximal planar graph), $k$ is $O(n^{\varepsilon})$  and $s$ is $O(n^{1-\varepsilon})$ for constant $0 < \varepsilon \le 1$. As a consequence we show that finding a $(k,s)$-BCP of $G$ is NP-hard in the same setting. 
Note that the previously known hardness proofs for finding $k$-BCPs either require that $G$ is weighted and nonplanar~\cite{WakabayashiCS07} or $G$ contain cut vertices~\cite{DyerF85}.
In contrast, if $G$ is planar and 4-connected, then $G$ admits a Hamilton cycle~\cite{tutte1956theorem} and, therefore, a $(k,s)$-BCP is easily obtained by partitioning a Hamilton cycle into the desired pieces.
Finally, we modify our construction to also show that BR$(G,k,s)$ is PSPACE-complete even for the special case when $G$ is planar, $k = 3$, and $s$ is bounded above by $O(n^{1-\varepsilon})$ for constant $0 < \varepsilon \le 1$.

\section{Recombination with Unbounded Slack}
\label{sec:connected}

In this section, we show that the configuration space $\mathcal{R}_\infty(G,k)$ is connected under recombination moves if $G$ is connected (cf.~Theorem~\ref{thm:connected}). The proof proceeds by induction on $k$, where the induction step depends on Lemma~\ref{lem:induct} below.

We briefly review some standard graph terminology. A \emph{block} of a graph $G$ is a maximal biconnected component of $G$.
A vertex $v\in V(G)$ is a \emph{cut vertex} if it lies in two or more blocks of $G$, otherwise it is a \emph{block vertex}.
In particular, if $v$ is a block vertex, then $G-v$ is connected.
If $G$ is a connected graph with two or more vertices, then every block has at least two vertices.
A block is a \emph{leaf-block} if it contains precisely one cut vertex of $G$.
Every connected graph either is biconnected or has at least two leaf blocks.
The \emph{arboricity} of a graph $G$ is the minimum number of forests that cover all edges in $E=(G)$.
The \emph{degeneracy} of $G$ is the largest minimum vertex degree over all induced subgraphs of $G$.
It is well known that if the arboricity of a graph is $a$, then its degeneracy is between $a$ and $2a-1$.

\ifthenelse{\boolean{lncs}}{\later{
\section{Omitted proofs from Sections~\ref{sec:connected}, \ref{sec:hamiltonian} and \ref{sec:disconnected}}\label{sec:omitted}
}}{}

\both{
\begin{lemma}\label{lem:arbor}
If the arboricity of a graph is $a$, then it contains a block vertex of degree at most $2a-1$.
\end{lemma}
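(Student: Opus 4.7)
The proof splits on whether $G$ is biconnected. If $G$ is biconnected, then every vertex of $G$ is a block vertex by definition, and the arboricity-degeneracy relationship recalled right before the lemma says there is a vertex of degree at most $2a-1$, which closes this case immediately.

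So assume $G$ has at least one cut vertex. Then $G$ has a leaf-block $B$, and $B$ contains exactly one cut vertex $c$ of $G$; every other vertex of $B$ is a block vertex of $G$. The key observation, which drives the rest of the argument, is that if $v \in V(B)\setminus\{c\}$, then every neighbor of $v$ in $G$ lies in $B$: otherwise any edge from $v$ to outside $B$ would lie in a second block containing $v$, contradicting that $v$ is a block vertex. Consequently $\deg_G(v)=\deg_B(v)$ for every non-cut vertex $v\in B$, and it suffices to find such a $v$ with $\deg_B(v)\le 2a-1$.

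For that, I would use a short averaging argument on $B$ rather than a degeneracy argument, because naively applying degeneracy to $B$ could hand us back the vertex $c$. The subgraph $B$ has arboricity at most $a$, hence $|E(B)|\le a(|V(B)|-1)$. The sum of degrees in $B$ is $2|E(B)|\le 2a(|V(B)|-1)$, and subtracting $\deg_B(c)\ge 1$ gives
\[
\sum_{v\in V(B)\setminus\{c\}}\deg_B(v)\;\le\;2a(|V(B)|-1)-\deg_B(c)\;<\;2a(|V(B)|-1).
\]
Dividing by $|V(B)|-1$ (which is at least $1$ since $B$ is a block and therefore has at least two vertices), the average degree in $B$ over the non-cut vertices is strictly less than $2a$, so some $v\in V(B)\setminus\{c\}$ satisfies $\deg_B(v)\le 2a-1$. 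By the observation above, this $v$ is a block vertex of $G$ with $\deg_G(v)\le 2a-1$, finishing the proof.

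The only real subtlety is the bookkeeping in the non-biconnected case: one must resist the temptation to apply the degeneracy bound directly to $B$, since the low-degree vertex it produces might be exactly the cut vertex $c$. The averaging trick avoids this pitfall by excising $c$ first and using $\deg_B(c)\ge 1$ to convert the non-strict arboricity edge bound into the strict inequality we need. The trivial boundary cases (a leaf-block that is a single edge, or $G$ having at most one vertex) are absorbed painlessly: a pendant edge yields a degree-$1$ block vertex, and $G$ with no edges makes the statement vacuous.
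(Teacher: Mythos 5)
Your proof takes essentially the same route as the paper: split on biconnectivity, use the arboricity--degeneracy bound in the biconnected case, and in the other case pick a leaf-block, bound its edge count via arboricity, and use $\deg(c)\ge 1$ to force a low-degree non-cut vertex. Your direct averaging phrasing is just the contrapositive of the paper's proof by contradiction (the paper assumes every block vertex of the leaf-block has degree $\ge 2a$, derives $|E(G[U])| \ge \tfrac12(2a(|U|-1)+1) > a(|U|-1)$, and contradicts arboricity), so the two arguments coincide in substance.

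One small oversight to fix: you never reduce to a connected component of $G$, so your dichotomy ``$G$ is biconnected'' versus ``$G$ has a cut vertex'' is exhaustive only for connected graphs on at least two vertices. A disconnected graph such as two disjoint copies of $K_3$ is neither biconnected nor has a cut vertex, so as written your case analysis skips it. The paper opens its proof with precisely this reduction (``it is enough to prove the claim for a connected component''); add one sentence doing the same, since arboricity only drops and block vertices persist when passing to a component. Everything else is fine; in particular your explicit observation that $\deg_G(v)=\deg_B(v)$ for every non-cut vertex $v$ of a leaf-block $B$ is correct and is exactly the fact the paper uses implicitly when it speaks of ``the degree of every block vertex in $U$'' without distinguishing $G$ from $G[U]$.
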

}

\later{
\begin{proof}
It is enough to prove the claim for a connected component of $G$, so we may assume that $G$ is connected.
First assume that $G$ is biconnected. Then every vertex is a block vertex. Since the degeneracy of $G$ is at most $2a-1$, there exists a vertex of degree at most $2a-1$, as required.
Next assume that $G$ is not biconnected. Let $G[U]$ be a leaf-block of $G$, and let $u\in U$ be the unique cut-vertex of $G$ in $U$. Since the degeneracy of $G$ is at most $2a-1$, there exists a vertex in $U$ whose degree in $G[U]$ is at most $2a-1$.
If a block vertex $v\in U\setminus \{u\}$ has degree at most $2a-1$, our proof is complete.
Suppose, to the contrary, that the degree of every block vertex in $U$ is at least $2a$ (and the degree of the unique cut vertex in $G[U]$ is at least 1).
By the handshake lemma, the number of edges in $G[U]$ is at least $\frac12(2a(|U|-1)+1)>a(|U|-1)$. However, the arboricity of $G[U]$ is at most $a$, and $a$ forests on the vertex set $U$ jointly contain at most $a(|U|-1)$ edges, which provides a contradiction.\ifthenelse{\boolean{lncs}}{\hfill$\Box$}{}
\end{proof}
}

\ifthenelse{\boolean{lncs}}{The proof of Lemma~\ref{lem:arbor} can be found in Appendix~\ref{sec:omitted}. }{}The heart of the induction step of our main result hinges on the following lemma.

\begin{lemma}\label{lem:induct}
Let $G$ be a connected graph, $k\geq 2$ an integer, and $\Pi_1,\Pi_2\in \Part(G,k)$ be two $k$-partitions of $G$.
Then there exists a block vertex $v \in V(G)$ such that up to three recombination moves can transform $\Pi_1$
and $\Pi_2$ each to two new $k$-partitions in which $\{v\}$ is a singleton distinct.
\end{lemma}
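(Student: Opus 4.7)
I will locate a single block vertex $v\in V(G)$ such that, in each of $\Pi_1$ and $\Pi_2$, the induced subgraph $G[\Pi_i(v)\setminus\{v\}]$ has at most three connected components; I will then carry out an explicit three-move \emph{shipping} procedure per partition to isolate $v$ as a singleton district.

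\textbf{Choosing $v$.} For each $i\in\{1,2\}$ and each district $V\in\Pi_i$ fix a spanning tree $T^{(i)}_V$ of $G[V]$, and set $F_i:=\bigcup_V T^{(i)}_V$. Writing $d_i(v):=\deg_{F_i}(v)$, the handshake lemma in $F_i$ gives $\sum_v d_i(v)=\sum_V 2(|V|-1)=2(n-k)$, so the average of $d_1(v)+d_2(v)$ over $v\in V(G)$ is $4(n-k)/n<4$; hence some vertex $u$ has $d_1(u)+d_2(u)\le 3$, and in particular $d_i(u)\le 3$ for each $i$. When $G$ is biconnected, every vertex is already a block vertex of $G$, so one may take $v=u$. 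For general $G$, I augment $F_1\cup F_2$ with a minimal set $A$ of $G$-edges to obtain a connected spanning subgraph $H\subseteq G$. Since $F_1$, $F_2$, and $A$ are each forests, $H$ has arboricity at most $3$, and Lemma~\ref{lem:arbor} produces a block vertex $v$ of $H$ with $\deg_H(v)\le 5$. Because $H$ is a spanning subgraph of $G$ and $H-v\subseteq G-v$ is connected, $v$ is also a block vertex of $G$, and its degrees in $T^{(i)}_{\Pi_i(v)}$ are bounded in terms of $\deg_H(v)$.

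\textbf{Three-move shipping in $\Pi_i$.} Write $V_i:=\Pi_i(v)$ and let $A_1,\ldots,A_t$ be the connected components of $G[V_i\setminus\{v\}]$. Each $A_j$ contains a $T^{(i)}_{V_i}$-neighbor of $v$, so $t\le d_i(v)$; assuming $d_i(v)\le 3$, isolate $v$ as follows. For $j=1,\ldots,t-1$: pick any district $W$ of the current partition adjacent (in $G$) to $A_j$, which exists because $G-v$ is connected and $A_j$ therefore has external neighbors; merge $V_i$ with $W$ and re-split into $V_i\setminus A_j$ and $W\cup A_j$. Both pieces are connected---the first because $v$ is adjacent to each remaining $A_{j'}$, the second because of the chosen $A_j$--$W$ edge---so the move is valid and the branch-count of $v$ decreases by one. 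After $t-1$ shippings, the district containing $v$ is $\{v\}\cup A_t$, in which $v$ is non-cut; a final merge with any adjacent district, splitting off $\{v\}$, yields a singleton. The total is at most $t\le 3$ recombinations.

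\textbf{Main obstacle.} The delicate point is reconciling the two ways of choosing $v$: the averaging argument gives a vertex with $d_1+d_2\le 3$ but possibly a cut vertex of $G$, whereas Lemma~\ref{lem:arbor} applied to $H$ yields a block vertex of $G$ but only bounds the combined $H$-degree by $5$, allowing $d_i$ to be as large as $5$ in one partition. I expect to close this gap either by a localized averaging (for example, restricting attention to a leaf block of $G$ whose interior vertices are automatically block vertices), or by strengthening the shipping step so that a single move can absorb several components $A_j$ when the chosen district $W$ borders more than one of them---allowing the three-move budget to accommodate $t>3$.
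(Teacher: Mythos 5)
There is a genuine gap, which you flagged yourself but did not close, and it lies exactly where you expected: finding a single \emph{block vertex} whose forest-degree is at most $3$ in \emph{both} $F_1$ and $F_2$. Your augmented graph $H=F_1\cup F_2\cup A$ has arboricity $3$ (three forests in the cover), so Lemma~\ref{lem:arbor} only certifies a block vertex of $H$-degree $\leq 2\cdot 3-1=5$; this does not bound $\deg_{F_i}(v)$ by $3$, and the three-move budget then fails. Your first observation (the averaging argument giving $d_1(u)+d_2(u)\le 3$ for some $u$) is correct but, as you note, gives no control over whether $u$ is a cut vertex, so it only rescues the biconnected case.

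The paper's fix is short and you were one step away from it: do not add one minimal connector $A$ to $F_1\cup F_2$. Instead, augment $F_1$ by itself with $k-1$ inter-district edges to a spanning tree $T_1$ of $G$, and likewise augment $F_2$ to a spanning tree $T_2$, then set $G'=T_1\cup T_2$. Since $E(G')$ is covered by $E(T_1)$ and the forest $E(T_2)\setminus E(T_1)$, the arboricity of $G'$ is at most $2$, and Lemma~\ref{lem:arbor} yields a block vertex $v$ of $G'$ with $\deg_{G'}(v)\le 3$. Because $F_i\subseteq T_i\subseteq G'$, both $\deg_{F_1}(v)\le 3$ and $\deg_{F_2}(v)\le 3$ hold simultaneously, and $v$ is a block vertex of $G$ since $G'$ spans $G$ and $G'-v$ is connected. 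Your three-move shipping step is sound as written (and in fact slightly cleaner than the paper's, since you ship connected components of $G[V_i\setminus\{v\}]$ rather than of $T(V_i)-v$, which sidesteps a potential ambiguity about tree-components being $G$-adjacent to one another). Neither of your two proposed patches (a localized averaging inside a leaf block, or absorbing several components in one move) is carried through, and neither is obviously easier than the two-spanning-trees trick above.
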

\begin{proof}
Let $\Pi_1=\{V_1,\ldots , V_k\}$ and $\Pi_2=\{W_1,\ldots , W_k\}$. We construct two spanning trees, $T_1$ and $T_2$, for $G$ that each contain $k-1$ edges between the {\distr s} of $\Pi_1$ and $\Pi_2$, respectively. Specifically, for $i\in \{1,\ldots ,k\}$, let $T(V_i)$ be a spanning tree of $G[V_i]$, $T(W_i)$ a spanning tree of $G[W_i]$. As $G$ is connected, we can augment the forest $\bigcup_{i=1}^k T(V_i)$ to a spanning tree $T_1$ of $G$, using $k-1$ new edges, which connect vertices in distinct {\distr s}. Similarly, we can augment $\bigcup_{i=1}^k T(W_i)$ to a spanning tree $T_2$ of $G$. Now, let $G' = T_1 \cup T_2$.
By definition, the arboricity of $G'$ is at most 2. By Lemma~\ref{lem:arbor}, $G'$ contains a block vertex $v$ with $\deg_{G'}(v)\leq 3$.

We show that we can modify $\Pi_1$ (resp., $\Pi_2$) to create a singleton {\distr} $\{v\}$ in at most three moves. Assume without loss of generality that $v\in V_1$ and $v\in W_1$. Since $\deg_{G'}(v)\leq 3$, we have $\deg_{T(V_1)}(v)\leq 3$ and $\deg_{T(W_1)}(v)\leq 3$. Consequently, $T(v_1)-v$ (resp., $T(W_1)-v$) has at most three components, each of which is adjacent to some other {\distr}, since $G'-v$ is connected. Up to three successive recombinations can decrease the {\distr} $V_1$ with the components of $T(V_1)-v$, and reduce $V_1$ to $\{v\}$. Similarly, at most three successive recombinations can reduce $W_1$ to $\{v\}$. \ifthenelse{\boolean{lncs}}{\hfill$\Box$}{}
\end{proof}

\begin{theorem}\label{thm:connected}
Let $G$ be a connected graph and $k\geq 1$ a positive integer. For all $\Pi_1,\Pi_2\in \Part(G,k)$, there exists a sequence of at most $6(k-1)$ recombination moves that transforms $\Pi_1$ to $\Pi_2$.
\end{theorem}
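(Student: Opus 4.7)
The plan is a straightforward induction on $k$, with Lemma~\ref{lem:induct} doing essentially all the work at each step. The base case $k=1$ is vacuous since $\Part(G,1)$ contains only the trivial partition $\{V(G)\}$, so zero moves suffice and $6(k-1)=0$.

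For the induction step, assume the statement holds for $k-1$ and fix two partitions $\Pi_1,\Pi_2\in\Part(G,k)$. Apply Lemma~\ref{lem:induct} to obtain a block vertex $v\in V(G)$ together with sequences of at most three recombinations transforming $\Pi_1\to\Pi_1'$ and $\Pi_2\to\Pi_2'$, where both $\Pi_1'$ and $\Pi_2'$ contain $\{v\}$ as a singleton district. Because $v$ is a block vertex, $G-v$ is still connected, and the remaining $k-1$ districts of $\Pi_1'$ and of $\Pi_2'$ form two elements of $\Part(G-v,\,k-1)$. By the induction hypothesis applied to $G-v$, these two $(k-1)$-partitions are linked by a sequence of at most $6(k-2)$ recombinations. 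Each such recombination only merges and resplits two non-singleton districts, so it lifts verbatim to a recombination on the full $k$-partition of $G$ that leaves the singleton $\{v\}$ untouched. Concatenating the three phases yields a total of at most $3+6(k-2)+3=6(k-1)$ moves, closing the induction.

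The only subtle point to verify carefully is the second phase: one must check that a recombination of $\Pi_1'$ restricted to $G-v$ produces a valid recombination of $\Pi_1'$ on $G$. This is immediate because the connectivity condition for the two merged (and then resplit) districts depends only on the induced subgraphs $G[V_i']$ and $G[V_j']$ with $v\notin V_i'\cup V_j'$, so nothing about $v$ or edges incident to $v$ is used. The induction hypothesis applied to $G-v$ therefore produces a legitimate sequence of moves in $\mathcal{R}_\infty(G,k)$ that preserves $\{v\}$ as a district.

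The main obstacle is really already embedded in Lemma~\ref{lem:induct}: one must produce the singleton $\{v\}$ at a block vertex so that removing it keeps the remaining graph connected (otherwise induction cannot be applied to $G-v$). Once the lemma is in hand, the theorem follows by the bookkeeping above, and the constant $6$ is exactly $2\cdot 3$, accounting for three moves on each side to isolate $v$.
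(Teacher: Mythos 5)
Your proof is correct and follows exactly the same inductive strategy as the paper: use Lemma~\ref{lem:induct} to isolate a block vertex $v$ as a singleton in both partitions with at most three moves each, delete $v$ and recurse on $G-v$ with $k-1$ parts, then lift the resulting moves back to $G$. The extra paragraph you include explicitly verifying that recombinations in $G-v$ remain valid in $G$ (since $v$ and its incident edges play no role in the merged-and-resplit pair) is a small and welcome addition of rigor over the paper's one-line remark, but the argument and the bookkeeping $3+6(k-2)+3=6(k-1)$ are the same.
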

\begin{proof}
We proceed by induction on $k$. In the base case, $k=1$, and $\Pi_1=\Pi_2$. Assume that $k>1$ and claim holds for $k-1$.
By Lemma~\ref{lem:induct}, we can find a block vertex $v\in V(G)$ and up to six recombination moves transform $\Pi_1$ and $\Pi_2$ into $\Pi_1'$ and $\Pi_2'$ such that both contain $\{v\}$ as a singleton {\distr}. Since $v$ is a block vertex, $G-v$ is connected;
and since $\{v\}$ is a singleton {\distr} in both $\Pi_1'$ and $\Pi_2'$, we have $\Pi_1-\{v\},\Pi_2-\{v\}\in \Part(G-v,k-1)$.
By induction, a sequence of up to $6(k-2)$ recombination moves in $G-v$ can transform $\Pi_1-\{v\}$ $\Pi_2-\{v\}$. These moves remain  valid recombination moves in $G$ if we add singleton {\distr} $\{v\}$. Overall, the combination of these sequences yields a sequence of up to $6+6(k-2)=6(k-1)$ recombination moves that transforms $\Pi_1$ to $\Pi_2$. This completes the induction step.\ifthenelse{\boolean{lncs}}{\hfill$\Box$}{}
\end{proof}

\section{Recombination with Slack}
\label{sec:hamiltonian}

In this section, we prove that the configuration space $\mathcal{R}_s(G,k)$ is connected if the slack is greater or equal to the average {\distr} size, that is, $s\geq n/k$, and the underlying graph $G$ is Hamiltonian (Theorem~\ref{thm:hamiltonian}). 
%The Hamilton cycle plays an essential role in the proof.
%We conjecture that the results generalize to biconnected graphs, as well.

Let $G$ be a graph with $n$ vertices that contains a Hamilton cycle $C$.
Assume that $n$ is a multiple of $k$. A $k$-partition in $\Bal_s(G,k)$ is \emph{canonical} if each {\distr} consists of consecutive vertices along $C$. Using a slack of $s\geq n/k$, we can transform any canonical $k$-partition to any other using $O(k^2)$ reconfigurations.

\both{
\begin{lemma}\label{lem:canonical}
Let $G$ be a graph with $n$ vertices and a Hamilton cycle $C$, $k\geq 1$ is a divisor of $n$, and $s\geq n/k$. 
Then the subgraph of $\mathcal{R}_s(G,k)$ induced by canonical $k$-partitions is connected and its diameter is at most $k^2+1$.
\end{lemma}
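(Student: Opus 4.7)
The plan is to parameterize each canonical $k$-partition by the sorted cyclic tuple of its $k$ cut positions $(c_1,c_2,\dots,c_k)$ on the Hamilton cycle $C$, so that district $V_i$ is the arc from $c_{i-1}+1$ to $c_i$ and has cyclic size $c_i - c_{i-1} \pmod{n}$. A recombination move between two adjacent districts $V_i, V_{i+1}$ then corresponds to shifting cut $c_i$ to some new position in the combined arc $(c_{i-1}, c_{i+1})$, subject to both resulting sub-arcs having size in $[\max(1, n/k-s), n/k+s]$. Since $s\geq n/k$, this range simplifies to $[1, n/k+s]\subseteq [1,2n/k]$, and because any two adjacent districts have combined size at most $2(n/k+s)$, the feasible window of positions for each cut movement is always a nonempty integer interval.

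I would then transform $A=(a_1,\dots,a_k)$ into $B=(b_1,\dots,b_k)$ by a sweep algorithm. First, align the cyclic labeling so that cut $i$ of $A$ is matched with cut $i$ of $B$; since canonical partitions are unordered, this relabeling is free, though one initial recombination (accounting for the ``$+1$'' in the bound) may be needed to make the matching consistent with the slack when no rotation perfectly aligns the two. Then process the cuts in the order $i=1,2,\dots,k$: at step $i$, cuts $1,\dots,i-1$ already sit at $b_1,\dots,b_{i-1}$ while the remaining cuts occupy positions consistent with the slack. If $b_i$ lies in cut $i$'s current feasible window, a single recombination places it at $b_i$. Otherwise, cut $i+1$ (or further cuts in sequence) is first shifted forward just enough to open the window, after which cut $i$ is slid toward $b_i$, possibly in several recombinations that each advance it as far as slack permits. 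Each such cascading shift displaces a distinct not-yet-placed cut and therefore terminates within at most $k-i+1$ moves.

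Summing the costs across all $k$ steps bounds the total number of recombinations by $\sum_{i=1}^{k}(k-i+1) + 1 = k(k+1)/2 + 1 \leq k^2+1$. The main obstacle is checking that every intermediate configuration of a cascade is itself a valid canonical $(k,s)$-BCP. I would handle this through an inductive invariant: each individual shift splits a combined arc of length at most $2(n/k+s)$ into two sub-arcs of size in $[1, n/k+s]$, while all other districts are left unchanged and the total length $n$ is conserved; thus no district is ever forced outside the allowed range $[1,2n/k]$. Combining this invariant with the observation that each cascade strictly decreases the potential function measuring total displacement of cuts from their targets closes the proof.
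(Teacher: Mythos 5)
Your approach differs from the paper's: instead of the paper's two-phase argument (first bring any canonical partition to the perfectly-balanced one using $\binom{k}{2}$ moves, then rotate one balanced canonical partition into the other in $k+1$ moves, for a total of $2\binom{k}{2}+k+1=k^2+1$), you attempt a single left-to-right sweep that moves each cut to its target position directly, with cascading shifts of not-yet-placed cuts whenever a target is out of range. The idea is reasonable but the accounting is not sound as written. The key claim---that the cascade triggered at step $i$ takes at most $k-i+1$ moves because ``each cascading shift displaces a distinct not-yet-placed cut''---is not justified. When you push cut $i+1$ forward to open room for cut $i$, the amount you can push it in one recombination is limited by the slack constraint on district $i+1$ (which grows) and district $i+2$ (which shrinks); if cut $i$ needs to travel far, cut $i+1$ may need to be pushed several times, interleaved with forward steps of cut $i$ itself (your ``possibly in several recombinations that each advance it as far as slack permits''). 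Those repeated shifts of the same cut break the one-move-per-cut bound, and the later remark that a displacement potential decreases at each step only gives an $O(nk)$ bound, not $O(k^2)$.

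There is also a gap in the validity invariant. You assert that because each recombination splits a combined arc of length at most $2(n/k+s)$, no district leaves $[1,n/k+s]$; but when you shift an intermediate cut $j$ in a cascade you are not free to choose the split point---you are choosing it to ``open the window'' for cut $i$---and you have not argued that the required shift amount is always achievable in one valid recombination of districts $j$ and $j+1$. Moreover, the cascade can in principle wrap around the cycle and bump into cuts $1,\dots,i-1$ that you have already fixed, destroying the inductive invariant that those cuts stay at $b_1,\dots,b_{i-1}$; this needs to be ruled out. Finally, the ``$+1$'' move for cyclic alignment is vague: two canonical partitions do not generally differ by a pure cyclic shift, so it is not clear what that single extra recombination is supposed to accomplish. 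The paper avoids all these issues by first reducing to the perfectly balanced configuration, where the observation that among $V_{m+1},\dots,V_k$ the average size is exactly $n/k$ guarantees the existence of an adjacent small/large pair to start each propagation, and where the subsequent rotation is between two partitions that are literal cyclic shifts of each other.
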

}

\later{
\begin{proof}
Let $\Pi = \{V_1, \ldots, V_k\}$ be a canonical $k$-partition with slack $s$, where {\distr s} $V_i$, $i\in \{1,\ldots,k\}$ are labeled in cyclic order along $C$. 
We call a {\distr} $V_i$ \emph{large} if  $|V_i| > n/k$.
We first show that at most $\binom{k}{2}$ moves are enough to bring any canonical $k$-partition into a balanced canonical partition, i.e., none of its {\distr s} is large. 
Then we show that any pair of balanced canonical $k$-partitions is within $k+1$ moves apart. 
Overall the diameter of $\mathcal{R}_s(G,k)$ is at most $2\binom{k}{2}+k+1=k^2+1$.

Our first proof is by induction on $k-m$, where the first $m$ blocks are balanced
(i.e, $|V_i|=n/k$ for all $i\leq m$). In the base case, $k-m=0$, and all {\distr s} are balanced. In the induction step, assume that $0\leq m<k$. 
Since average size of the {\distr s} $V_{m+1},\ldots ,V_k$ is $n/k$, 
but $|V_{m+1}|\neq n/k$, then there exist {\distr s} both below and above the average.
Let $j>m$ be the first index such that $V_j\leq n/k$ and $V_{j+1}\geq n/k$ or vice versa.
We recombine $V_j$ and $V_{j+1}$ into $V_j'$ and $V_{j+1}'$ such that $|V_j'| = n/k$.
By assumption, $2n/k - s \le |V_j|+|V_{j+1}| \le 2n/k + s$, thus $n/k-s\leq |V_{j+1}'|\leq n/k+s$, hence the new $k$-partition is a $(k,s)$-BCP. 
We then successively recombine $V_i$ and $V_{i-1}$ for $i=j,j-1,\ldots, m+1$. In each of these  recombinations, we set $|V_{i-1}'|=n/k$ and $|V_i'| =|V_{i-1}\cup V_i|-n/k$, thus $n/k-s\leq  |V_i'|\leq n/k+s$. After $j-m-1\leq k-m-1$ moves, we obtain a $(k,s)$-BCP $\Pi''$ with $|V_{m+1}''|=n/k$, and we can increment $m$. 
The number of recombination moves is bounded by $\sum_{m=0}^{k-2}(k-m-1)=\sum_{i=1}^{k-1}i=\binom{k}{2}$.

It remains to show that any two balanced canonical partitions in $\mathcal{R}_s(G,k)$ 
are within $k$ moves apart. Note that if $\Pi_1=\{V_1,\ldots , V_k\}$ and $\Pi_2=\{W_1,\ldots , W_k\}$ are two balanced canonical partitions, then $\Pi_2$ is a cyclic shift of $\Pi_1$ along $C$. Without loss of generality, assume that $V_1 \cap W_1 \neq \emptyset$ and $W_1\subset V_1 \cup V_2$.
We describe $k+1$ moves that brings $\Pi_1$ to $\Pi_2$. This process is similar to the induction step of the previous argument. 
First, recombine $V_1$ and $V_2$ making $V_1' = V_1 - W_2$ and $V_2' = V_2 \cup W_2$.
Then, recombine $V_i'$ and $V_{i+1}$ making making $V_i'' = W_i$ and $V_{i+1}' = V_{i+1} \cup W_{i+1}$ for $i=2,\ldots ,k$.
Finally, we recombine $V_1'$ and $V_k'$ into $W_1$ and $W_k$.
Through this procedure, there are only two unbalanced {\distr s} at any given time; 
the only large {\distr} is $V_1'$ and $|V_1'| < 2n/k$.\ifthenelse{\boolean{lncs}}{\hfill$\Box$}{}
\end{proof}
}

\ifthenelse{\boolean{lncs}}{
\proofsketch{
We proceed by induction: We assume that the first $\ell\in \{0,\ldots , k\}$ 
%$0\leq \ell\leq k$ 
{\distr s} each have size $\frac{n}{k}$, and we change the size of the $(\ell+1)$st {\distr} to $\frac{n}{k}$ using at most $k-\ell-1$ recombinations. Since the average size of the remaining $k-\ell$ {\distr s}  
is $n/k$, there are two consecutive {\distr s} of size at most $\frac{n}{k}$ and at least $\frac{n}{k}$, respectively. We recombine the first such pair of {\distr s}, and propagate the changes to 
the $(\ell+1)$st {\distr}, completing the induction step. \hfill$\Box$\smallskip
}
}{}

In the remainder of this section, we show that every $k$-partition in $\Bal_s(G,k)$
can be brought into canonical form by a sequence of $O(nk)$ recombinations.

\paragraph{Preliminaries.} We introduce some terminology.
Let $\Pi=\{V_1,\ldots ,V_k\}\in \Bal_s(G,k)$ with a slack of $s\geq n/k$.
For every $i\in \{1,\ldots , k\}$, a \emph{fragment} of $G[V_i]$ is
a maximum set $F\subset V_i$ of vertices that are contiguous along $C$.
Every set $V_i$ is the disjoint union of one or more fragments.
The $k$-partition $\Pi$ is canonical if and only if every {\distr} has precisely one fragment.
Our strategy is to ``defragment'' $\Pi$ if it is not canonical, that is, we
reduce the number of fragments using recombination moves.

We distinguish between two types of {\distr s} in $\Pi$: 
A {\distr} $V_i$ is \emph{small} if $|V_i|\leq n/k$, otherwise it is \emph{large}.
Every edge in $E(G)$ is either an edge or a \emph{chord} of the cycle $C$.
For every $i\in \{1,\ldots , k\}$, let $f_i$ be the number of fragments of $V_i$.
Let $T_i$ be a spanning tree of $G[V_i]$ that contains the minimum number of chords.
The edges of $G[V_i]$ along $C$ form a forest of $f_i$ paths; we can construct $T_i$ by
augmenting this forest to a spanning tree of $G[V_i]$ using $f_i-1$ chords.

The \emph{center} of a tree $T$ is a vertex $v\in V(T)$ such that each component of
$T-v$ has up to $|V(T)|/2$ vertices. It is well known that every tree has a center.
For $i\in \{1,\ldots ,k\}$, let $c_i$ be a center of the spanning tree $T_i$ of $G[V_i]$.
Let the fragment of $V_i$ be \emph{heavy} if it contains $c_i$; and \emph{light} otherwise.
We also define a parent-child relation between the fragments of $V_i$.
Fragments $A$ and $B$ are in a parent-child relation if they are adjacent in $T_i$
and if $c_i$ is closer to $A$ than to $B$ in $T_i$.
Note that a light fragment and its descendants jointly contain less than
$|V_i|/2\leq (n/k+s)/2$ vertices; see Fig.~\ref{fig:fragments}.

	\begin{figure}[htbp]
		\centering
		\includegraphics{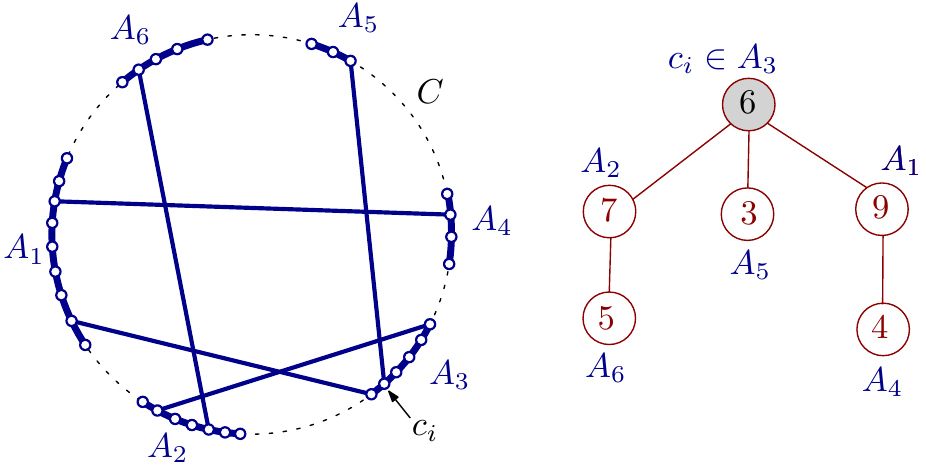}
		\caption{Left: A distinct $V_i$ with 26 vertices (hollow dots) in six fragments (bold arcs) along $C$. 
    The spanning tree $T_i$ of $G[V_i]$ contains five edges, with a center at $c_i$. 
    Right: Parent-child relationship between fragments is defined by the tree rooted at the fragment containing $c_i$.}
		\label{fig:fragments}
	\end{figure}

The following four lemmas show that we can decrease the number of fragments under some conditions.
In all four lemmas, we assume that $G$ is a graph with a Hamiltonian cycle $C$,
and $\Pi$ is a noncanonical $(k,s)$-BCP with $s\geq n/k$.

\begin{lemma}\label{lem:light}
%Let $G$ be a graph with a Hamiltonian cycle $C$, and let $\Pi$ be a non-canonical $k$-partition in $\Bal_{n/k}(G,k)$.
If a light fragment of a large {\distr} is adjacent to a small {\distr} along $C$,
then a recombination move can decrease the number of fragments.
\end{lemma}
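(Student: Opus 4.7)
The plan is to achieve the fragment reduction in a single recombination between the large {\distr} $V_i$ and the adjacent small {\distr} $V_j$: I will transfer from $V_i$ to $V_j$ the entire subtree $F^{*}\subseteq V_i$ consisting of the light fragment $F$ together with all of its descendants in the fragment tree of $V_i$. Setting $V_i' = V_i \setminus F^{*}$ and $V_j' = V_j \cup F^{*}$, I need to verify three things: that both new parts are connected, that the new partition still lies in $\Bal_s(G,k)$, and that the total number of fragments strictly drops.

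Connectivity should be immediate from the definition of $F^{*}$ as the vertex set of a subtree of $T_i$ rooted at $F$: deleting a rooted subtree from a tree leaves a tree, so $G[V_i']$ contains a spanning subtree and is connected; and $G[V_j']$ is connected because $G[V_j]$ and $G[F^{*}]$ are each connected and $F\subseteq F^{*}$ shares a $C$-edge with $V_j$ by hypothesis.

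The size check is where the assumption $s \ge n/k$ does essentially all the work, and it is the only step I foresee as a real obstacle. Using the bound $|F^{*}| < |V_i|/2 \le (n/k+s)/2$ recalled just before the lemma, I expect three of the four slack inequalities to be routine; the one to verify carefully is
\[
|V_j'| \le \frac{n}{k} + \frac{n/k+s}{2} \le \frac{n}{k} + s,
\]
whose second step is precisely $s \ge n/k$. This is the pinch point where the hypothesis on the slack is actually used.

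Finally, I will count fragments. Let $p$ be the number of fragments of $V_i$ contained in $F^{*}$ and $q$ the number of fragments of $V_j$. Because $F^{*}$ is a union of complete fragments of $V_i$ (no surviving fragment of $V_i$ can split), $V_i'$ loses exactly $p$ fragments. On the other side, merging the $p+q$ intervals that $V_j$ and $F^{*}$ contribute along $C$ can only decrease the count; and since $F$ is adjacent along $C$ to a fragment of $V_j$ by hypothesis, at least one merger is forced, so $V_j'$ has at most $p+q-1$ fragments. The net change in the total fragment count is therefore at most $-p + (p+q-1) - q = -1$, which proves the lemma.
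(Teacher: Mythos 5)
Your proposal is correct and follows essentially the same route as the paper: move the light fragment together with its descendant fragments (the set $\overline{F}_1$ in the paper's notation) into the adjacent small district, check connectivity and slack, and observe that the forced merger along $C$ drops the fragment count. Your size chain $|V_j'| \le n/k + (n/k+s)/2 \le n/k + s$ is in fact a cleaner justification of the pinch point than the paper's, which loosely asserts $|\overline{F}_1| < n/k$ (an inequality that really only follows when $s = n/k$), and your explicit $p$-and-$q$ fragment accounting fills in a step the paper leaves implicit.
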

\begin{proof}
Assume without loss of generality that $v_1v_2$ is an edge of $C$, where $v_1\in F_1\subset V_1$, $v_2\in F_2\subseteq V_2$, $F_1$ is a light fragment of a large {\distr} $V_1$, and $F_2$ is some fragment of a small {\distr} $V_2$. Let $\overline{F}_1$ be the union of fragment $F_1$ and all its descendants. By the definition of the center $c_1$, we have $|\overline{F}_1|<|V_1|/2$. Apply a recombination replacing $V_1$ and $V_2$ with $W_1=V_1\setminus \overline{F}_1$ and $W_2=V_2\cup \overline{F}_1$.

We show that the resulting partition is a $(k,s)$-BCP. Note also that both $G[\overline{F}_1]$ and $G[V_1\setminus \overline{F}_1]=G[W_1]$ are connected. Since $v_1v_2\in E(G)$, then $G[V_2\cup \overline{F}_1]=G[W_2]$ is also connected.
As $W_1$ contains the center of $V_1$, we have $|W_1|\geq 1$ and $|W_1|<|V_1|\leq n/k+s$. As $V_2$ is small, have $|W_2|=|V_2|+|\overline{F}_1|< n/k +n/k\leq 2n/k\leq n/k+s$. Finally, note that $F_1\cup F_2$ is a single fragment in the resulting $k$-partition, hence the number of fragments decreased by at least 1.\ifthenelse{\boolean{lncs}}{\hfill$\Box$}{}
\end{proof}

\begin{lemma}\label{lem:else}
%Let $G$ be a graph with a Hamiltonian cycle $C$, and let $\Pi$ be a non-canonical $k$-partition in $\Bal_{n/k}(G,k)$.
If a light fragment of a large {\distr} is adjacent to some small {\distr} along $C$,
then there exists two adjacent {\distr s} along $C$ whose combined size is at most $2n/k$.
\end{lemma}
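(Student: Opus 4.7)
The plan is to argue by contradiction: assume that every pair of districts adjacent along $C$ has combined size strictly greater than $2n/k$, and derive a contradiction using the hypothesis that some light fragment $F$ of a large district $V_L$ lies adjacent along $C$ to a small district $V_s$.

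The first step is to extract structural information from the contradiction assumption. If two small districts were adjacent along $C$, their combined size would be at most $2n/k$, contradicting the strict inequality. Hence every neighbor of a small-district fragment in the fragment cycle $H$ along $C$ must belong to a large district; in particular, every neighbor of $V_s$ along $C$ is large.

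The second step is to exploit the lightness of $F$. By the center property of $T_L$, the union of $F$ with all of its descendants in the fragment tree of $V_L$ has fewer than $|V_L|/2 \le (n/k + s)/2$ vertices. Under the contradiction assumption $|V_L|+|V_s|>2n/k$ and $|V_s|\le n/k$, this pins $|V_L|$ into the range $(n/k, n/k + s]$, while the ``mass'' of $V_L$ on the $F$-side of the center $c_L$ must be strictly less than half of $|V_L|$. The plan is then to propagate these inequalities around the arc of $C$ through the fragments adjacent to $V_s$: each large district contributes at most one heavy fragment that can serve as a ``bridge'' between small-district fragments, bounding the number of such small-to-heavy adjacencies in $H$ by twice the number of large districts.

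The main obstacle is closing the counting step: combining the local inequality $|V_i|+|V_j|>2n/k$ along this arc with the global constraint $\sum_V |V|=n$ to force the total district mass to exceed $n$. I expect this to require a careful incidence count between small-district fragments and heavy fragments of large districts in $H$, together with the observation that under the contradiction assumption every such small-heavy adjacency forces a strict excess of vertex weight that, when summed globally, contradicts the total vertex budget $n$, completing the proof.
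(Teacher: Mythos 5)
The key issue is one of direction: the lemma as literally printed is actually \emph{false}, and the paper's own proof and application both show that the intended claim is the \emph{contrapositive-flavored converse}: \emph{if no light fragment of a large district is adjacent to a small district along $C$, then two adjacent districts along $C$ have combined size at most $2n/k$}. Indeed the paper's proof opens with ``Suppose, to the contrary, that every small district is adjacent only to heavy fragments along $C$, and the combined size of every pair of adjacent districts $\ldots$ is greater than $2n/k$,'' i.e.\ it assumes the negation of both the stated hypothesis and the stated conclusion. Likewise, in the proof of Theorem~\ref{thm:hamiltonian}, Lemma~\ref{lem:else} is invoked precisely in the ``Else'' branch, where the light-fragment/small-district adjacency of Lemma~\ref{lem:light} is \emph{absent}. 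You have taken the statement at face value (hypothesis $=$ such an adjacency exists) and are trying to deduce the conclusion from it; that direction simply fails. A small counterexample: $n=9$, $k=3$, $s=3$, $C=(v_1,\dots,v_9)$, $V_1=\{v_1,v_5\}$, $V_2=\{v_3,v_7\}$, $V_3=\{v_2,v_4,v_6,v_8,v_9\}$ with chords $v_2v_4, v_4v_6, v_6v_8$. Here a light fragment $\{v_2\}$ of the large district $V_3$ is adjacent to small $V_1$ and $V_2$, yet every pair of districts adjacent along $C$ is $(V_1,V_3)$ or $(V_2,V_3)$ with combined size $7>6=2n/k$.

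Beyond the directionality, the proposal is not a proof: you explicitly flag ``closing the counting step'' as the main obstacle and leave it open, so the argument is incomplete even on its own terms. The step you describe as hard is actually the easy part in the paper's proof, once the direction is fixed. Under the assumption that \emph{no} small district is adjacent to a light fragment and every adjacent pair of districts sums to more than $2n/k$, the paper injectively assigns each small district $V_i$ (via one of its fragments $F_i$) to the large district whose \emph{heavy} fragment is adjacent to $F_i$ clockwise. Since each large district has a unique heavy fragment and at most one district precedes it clockwise, this is a matching $M$. Every matched pair has combined size $>2n/k$ and every unmatched district is large, so the average district size exceeds $n/k$, contradicting $\sum_i |V_i| = n$. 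Your Step~2 (invoking the center property to bound the light subtree below $|V_L|/2$) plays no role in the correct argument; the heavy fragment is used only to get an injective assignment, not a mass bound.

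In short: the lemma's printed hypothesis should read as the \emph{absence} of a light-fragment/small-district adjacency; once you flip the hypothesis, the matching-plus-averaging argument closes the proof in a few lines, whereas the route you sketched cannot succeed.
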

\begin{proof}
Suppose, to the contrary, that every small {\distr} is adjacent only to heavy fragments along $C$, and the combined size of every pair of adjacent {\distr s} along $C$ is greater than $\frac{2n}{k}$, meaning that at least one {\distr} is large. 
We assign every small {\distr} to an adjacent large {\distr} as follows. For every small {\distr} $V_i$, let $F_i$ be one of its arbitrary fragments. 
We assign $V_i$ to the large {\distr} whose heavy fragment is adjacent to $F_i$ in the clockwise direction along $C$. Since every large {\distr} has a unique heavy fragment, and at most one {\distr}  precedes it in clockwise order along $C$, the assignment is a matching of the small {\distr s} to large {\distr s}. 
Denote this matching by $M$. Every {\distr} that is not part of a pair in $M$ must be large. By assumption, every pair in $M$ has combined size greater than $\frac{2n}{k}$, so the average {\distr} size over the {\distr s} in $M$ is greater than $\frac{n}{k}$. The {\distr s} not in $M$ are large so their average size also exceeds $\frac{n}{k}$. Overall the average {\distr} size exceeds $\frac{n}{k}$. But $\Pi$ is a $k$-partition of $n$ vertices, hence the average {\distr} size is exactly $\frac{n}{k}$, a contradiction.\ifthenelse{\boolean{lncs}}{\hfill$\Box$}{}
\end{proof}

\begin{lemma}\label{lem:average}
If {\distr s} $V_1$ and $V_2$ are adjacent along $C$ and $|V_1\cup V_2|\leq n/k+s$, then there is a recombination move that
either decreases the number of fragments, or maintains the same number of fragments and creates a singleton {\distr}.
\end{lemma}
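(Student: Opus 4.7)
My plan is to work inside the set $S = V_1 \cup V_2$, letting $m$ denote the number of fragments of $S$ along $C$. The central observation is that since $V_1$ and $V_2$ share a cycle edge of $C$, at least one $V_1$-fragment is fused with at least one $V_2$-fragment inside $S$, yielding the key inequality $m \le f_1 + f_2 - 1$, where $f_i$ counts the fragments of $V_i$. The goal is to recombine $V_1, V_2$ into two connected pieces $W_1, W_2$ of size at most $|S| \le n/k + s$ so that the fragment count contributed by these two {\distr s} is either strictly less than $f_1 + f_2$, or equal to $f_1 + f_2$ with one of $W_1, W_2$ a singleton.

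For the principal case $m \ge 2$, I would contract each of the $m$ fragments of $S$ and consider the multigraph $H$ whose edges are the chords of $C$ joining distinct fragments. Since $G[S]$ is connected, so is $H$; a spanning tree of $H$ then has a leaf $F_P$, and $G[S \setminus F_P]$ remains connected through the remaining spanning-tree chords together with the path edges inside each surviving fragment. Setting $W_1 := F_P$ and $W_2 := S \setminus F_P$ gives two induced-connected parts, each of size in $[1, |S|]$, which satisfies the slack (the lower bound $|W_i| \ge 1 \ge n/k - s$ uses $s \ge n/k$). The new fragment count for $\{W_1, W_2\}$ is $1 + (m - 1) = m \le f_1 + f_2 - 1$, a strict decrease.

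For the remaining case $m = 1$, the set $S$ is a single contiguous arc $v_1, \ldots, v_\ell$ along $C$. I would take $W_1 := \{v_1\}$ and $W_2 := \{v_2, \ldots, v_\ell\}$, both trivially connected as sub-arcs of $C$. The resulting fragment count from this pair is $2$: if $f_1 + f_2 > 2$ the count strictly decreases, while if $f_1 + f_2 = 2$ (each $V_i$ is a single fragment) the count is maintained and $W_1$ is the newly created singleton {\distr}.

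The main obstacle, as I see it, is ensuring that the recombination is nontrivial, i.e., $\{W_1, W_2\} \neq \{V_1, V_2\}$. In Case 1, the adjacency hypothesis forces the $S$-fragment containing the shared cycle edge to strictly contain vertices from both $V_1$ and $V_2$, so no single $S$-fragment can equal $V_1$ or $V_2$; hence $F_P$ differs from both. In Case 2, if $W_1 = \{v_1\}$ happens to coincide with $V_1$, I would symmetrically take $W_1 = \{v_\ell\}$ from the opposite end. The only genuinely degenerate situation is $|V_1| = |V_2| = 1$ with $|S| = 2$, where no distinct 2-partition of $S$ exists; this corner is locally canonical already and the outer defragmentation procedure can safely pick a different pair.
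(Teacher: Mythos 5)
Your proof is correct and reaches the same conclusion, but the route differs from the paper's. The paper works with the spanning trees $T_1,T_2$ already fixed in the preliminaries (each $T_i$ has exactly $f_i-1$ chords): $T_1\cup T_2\cup v_1v_2$ spans $G[V_1\cup V_2]$; if some $T_i$ contains a chord $e$, deleting $e$ splits this tree into two parts that each keep every original $V_i$-fragment intact while $F_1$ and $F_2$ fuse through the $C$-edge $v_1v_2$, so the fragment count drops by at least one; otherwise $f_1=f_2=1$, $V_1\cup V_2$ is a chain, and the first vertex is split off exactly as in your $m=1$ branch. You instead contract the fragments of $S=V_1\cup V_2$ into a multigraph $H$ on the chords, peel off a leaf $F_P$ of a spanning tree of $H$, and count $S$-fragments directly via $m\le f_1+f_2-1$. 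The two case splits do not coincide: your $m\ge 2$ is a strict sub-case of ``some $T_i$ has a chord'' (e.g., a single fragment of $V_2$ sitting between two fragments of $V_1$ gives $m=1$ but $f_1+f_2=3$), which your $m=1$ branch absorbs with a strict decrease whenever $f_1+f_2>2$. Your bookkeeping is slightly sharper in that the post-move fragment count is exactly $m$ rather than an upper bound; the paper's approach reuses the $T_i$ that Lemma~\ref{lem:singleton} also needs. On nontriviality: you correctly observe that when $|V_1|=|V_2|=1$ no recombination of this pair exists, so the lemma read literally fails there. The paper's proof has the same unaddressed corner case (it silently assumes the first vertex of the chain does not already form the singleton district $V_1$); in the intended application it is harmless, since such a partition already contains a singleton district and Lemma~\ref{lem:singleton} applies directly, but it is worth noting that this gap is inherited from the source, not introduced by your argument.
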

\begin{proof}
Assume, w.l.o.g., that $v_1\in F_1\subseteq V_1$, $v_2\in F_2\subseteq V_2$, where $v_1v_2$ is an edge of $C$, and $F_1$ and $F_2$ are fragments of $V_1$ and $V_2$, respectively.
The induced graph $G[V_1\cup V_2]$ is connected, and $T_1\cup T_2\cup v_1v_2$ is one of its spanning trees.
If $T_1$ or $T_2$ contains a chord, say $e$, then $(T_1\cup T_2\cup v_1v_2)-e$ has two components, $T_3$ and $T_4$, each of size at most $n/k+s-1$. A recombination move can replace $V_1$ and $V_2$ with $V(T_3)$ and $V(T_4)$.
Since fragments $F_1$ and $F_2$ merge into one fragment, the number of fragments decreases by at least one.
Otherwise, neither $T_1$ nor $T_2$ contains a chord. Then $V_1$ and $V_2$ each has a single fragment,
so $V_1\cup V_2$ is a chain of vertices along $C$. Let $v$ be the first vertex in this chain.
A recombination move can replace $V_1$ and $V_2$ with $W_1=\{v\}$ and $W_2=(V_1\cup V_2)\setminus \{v\}$.
By construction both $G[W_1]$ and $G[W_2]$ are connected, $|W_1|=1$, $|W_2|=|V_1\cup V_2|-1\leq n/k+s-1$,
and the number of fragments does not change.\ifthenelse{\boolean{lncs}}{\hfill$\Box$}{}
\end{proof}

\begin{lemma}\label{lem:singleton}
If there exists a singleton {\distr}, then there exists a sequence of at most $k-1$ recombination moves that decreases the number of fragments.
\end{lemma}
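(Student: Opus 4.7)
The plan is to use the singleton $V_1=\{v\}$ as a mobile probe that we translate clockwise along the Hamilton cycle $C$ until it becomes adjacent to a multi-fragment distinct, at which point one further recombination reduces the total fragment count.

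The key ingredient is a \emph{decrease move}. Suppose $V_1=\{v\}$ and the distinct clockwise-adjacent to $v$ on $C$ is some $V_j$ with fragments $F_{j,1},\ldots,F_{j,m}$, where $m\ge 2$ and $F_{j,1}$ contains $v$'s cycle-neighbor. Consider the auxiliary graph on the fragments of $V_j$ in which two fragments are joined whenever $G[V_j]$ contains a chord between them; this graph is connected because $V_j$ is, and every connected graph on at least two vertices has at least two non-cut vertices, so we may pick a fragment $F_{j,\ell}$ with $\ell\ne 1$ whose removal leaves $V_j\setminus F_{j,\ell}$ connected. Recombining $V_1\cup V_j$ into $W_1=F_{j,\ell}$ and $W_2=(V_j\setminus F_{j,\ell})\cup\{v\}$ yields a valid $(k,s)$-BCP: both parts are connected (the latter because $v$ is cycle-adjacent to $F_{j,1}\subseteq W_2$), and both sizes lie in $[1,|V_j|]\subseteq [n/k-s,n/k+s]$. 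The new pair contributes $1+(m-1)=m$ fragments, versus the $1+m$ contributed by $(V_1,V_j)$ originally, for a net decrease of one.

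Otherwise, when the clockwise-adjacent distinct $V_i$ is single-fragment (and not itself a singleton), so that $V_i=\{u_1,\ldots,u_\ell\}$ forms a chain along $C$, apply a \emph{shift move}: recombine $V_1\cup V_i$ into $W_1=\{u_\ell\}$ and $W_i=\{v,u_1,\ldots,u_{\ell-1}\}$. Both parts are paths on $C$, hence connected; their sizes ($1$ and $|V_i|$) lie within the slack; and each new part is a single fragment, so the total fragment count is preserved. The effect is to translate the singleton clockwise by one distinct. In the degenerate case where the clockwise-adjacent distinct is itself a singleton, the corresponding recombination would be the identity and is not needed, but we may re-designate that singleton as the working singleton at no cost.

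Iterate these moves until a decrease occurs. Listing the distincts of the fragments encountered clockwise from $V_1$ as $V_{i_1},V_{i_2},\ldots,V_{i_r}$ (with multiplicity), non-canonicity of $\Pi$ forces some distinct to appear at least twice, so some $V_{i_t}$ is multi-fragment. Let $t$ be the smallest such index; then $V_{i_1},\ldots,V_{i_{t-1}}$ are pairwise distinct single-fragment distincts (otherwise a repeat among them would already be multi-fragment, contradicting minimality of $t$), so $t-1$ is bounded by the number of single-fragment distincts, which is at most $k-2$ since at least one of the $k-1$ non-singleton distincts is multi-fragment. Hence $t\le k-1$, and the algorithm performs at most $t-1$ shift moves followed by one decrease move, for a total of at most $k-1$ recombinations. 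The main technical point is verifying that shifts preserve the invariants needed for later iterations: this holds because a shift modifies only $V_1$ and $V_i$, and the new $V_i'$ is again a single-fragment chain on $C$, so every other distinct, in particular every multi-fragment distinct of $\Pi$, is unchanged throughout the process.
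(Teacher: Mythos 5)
Your proof is correct and follows essentially the same two-phase strategy as the paper: shift the singleton clockwise past single-fragment districts until it is cycle-adjacent to a multi-fragment district, then perform one recombination that absorbs the singleton into that district's fragment structure while splitting off a piece, merging the singleton's fragment into an existing one. Your counting argument ($t-1$ shifts plus one decrease, with $t-1\le k-2$) matches the paper's ($\ell-1$ shifts plus one Case-1 move, with $\ell\le k-1$). The only superficial difference is in the decrease step: the paper takes an arbitrary chord of the spanning tree $T_2$ and splits along it (so the piece handed off to $W_2$ may consist of several fragments), whereas you insist on peeling off a single ``non-cut'' fragment $F_{j,\ell}$ via the auxiliary fragment graph; your version is a special case of the paper's (it corresponds to choosing a chord incident to a leaf in the tree of fragments). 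Both are valid, and the remainder of the argument — size bounds using $s\ge n/k$, connectivity of the two new parts, and the fragment count dropping by one — checks out.
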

\begin{proof}
Let $C=(v_1,\ldots , v_n)$. Assume without loss of generality that $V_1=\{v_1\}$ is a singleton {\distr}, and $v_2\in F_2\subseteq V_2$, where $F_2$ is a fragment of {\distr} $V_2$. Since not all {\distr s} are singletons, we may further assume that $|V_2|\geq 2$.
We distinguish between two cases.

Case~1: $F_2\neq V_2$ (i.e., $V_2$ has two or more fragments). Let $e$ be an arbitrary chord in $T_2$, and denote the two subtrees of $T_2-e$ by $T_2^-$ and $T_2^+$ such that $v_2$ is $T_2^-$. Since $|V_2|\leq n/k+s$, the subtrees $T_2^-$ and $T_2^+$ each have at most $n/k+s-1$ vertices. We can recombine $V_1$ and $V_2$ into $W_1=V_1\cup V(T_2^-)$ and $W_2=V(T_2^+)$. Then $|W_1|\leq 1+(n/k+s-1)=n/k+s$ and $|W_2|\leq n/k+s-1$; they both induce a connected subgraph of $G$. As the singleton fragment $V_1$ and $F_2$ merge into one fragment of $W_1$, the number of fragments decreases by at least one.

Case~2: $F_2=V_2$ (i.e., {\distr} $V_2$ has only one fragment). Let $t>2$ be the smallest index such that $v_t$ is in a {\distr} that has two or more fragments (such {\distr} exists since $\Pi$ is not canonical). Then the chain $(v_1,\ldots , v_{t-1})$ is covered by single-fragment {\distr s} that we denote by $V_1,\ldots ,V_\ell$ along $C$. By recombining $V_i$ and $V_{i+1}$ for $i=1,\ldots ,\ell-1$, we create new single-fragment {\distr s} $W_1,\ldots ,W_\ell$ such that $|W_i|=|V_{i+1}|$ for $i=1,\ldots , \ell+1$ and $|W_\ell|=|V_1|=1$. Now we can apply Case~1 for the singleton {\distr} $W_\ell$.\ifthenelse{\boolean{lncs}}{\hfill$\Box$}{}
\end{proof}

We are now ready to prove the main result of this section.

\begin{theorem}\label{thm:hamiltonian}
If $G$ is a Hamiltonian graph on $n$ vertices and $s\geq n/k$,
then $\mathcal{R}_s(G,k)$ is connected and its diameter is $O(nk)$.
\end{theorem}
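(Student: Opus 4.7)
The plan is to transform any $(k,s)$-BCP into a \emph{canonical} $k$-partition (each district consisting of consecutive vertices along a fixed Hamilton cycle $C$) in $O(nk)$ recombination moves, and then invoke Lemma~\ref{lem:canonical} to navigate between canonical partitions in $O(k^2)$ moves. Combined, this establishes both connectivity of $\mathcal{R}_s(G,k)$ and the diameter bound $O(nk)$, since $k \leq n$.

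Fix a Hamilton cycle $C$ and define the potential function $\phi(\Pi) = \sum_{i=1}^k f_i$ counting the total number of fragments of a $(k,s)$-BCP $\Pi$. Then $k \leq \phi(\Pi) \leq n$, with $\phi(\Pi) = k$ precisely when $\Pi$ is canonical. The central claim is that whenever $\Pi$ is not canonical, at most $k$ recombination moves (each producing a $(k,s)$-BCP) strictly decrease $\phi$. To prove this, check whether some light fragment of a large district is adjacent along $C$ to a small district. If so, Lemma~\ref{lem:light} supplies a single move that decreases $\phi$. Otherwise, the (contrapositive of) Lemma~\ref{lem:else} furnishes two adjacent districts along $C$ whose combined size is at most $2n/k \leq n/k + s$. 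Apply Lemma~\ref{lem:average}: either $\phi$ drops immediately, or a singleton district is created with $\phi$ unchanged, in which case Lemma~\ref{lem:singleton} finishes the job with at most $k-1$ additional moves.

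Since each iteration uses at most $k$ moves and strictly decreases $\phi$, and $\phi$ can decrease by at most $n-k$ in total, defragmentation takes at most $k(n-k) = O(nk)$ moves. To transform arbitrary $\Pi_1, \Pi_2 \in \Bal_s(G,k)$, we (i) defragment $\Pi_1$ into a canonical $\Pi_1'$ in $O(nk)$ moves, (ii) travel from $\Pi_1'$ to a canonical partition $\Pi_2'$ (in particular, one that $\Pi_2$ defragments into) in $O(k^2)$ moves via Lemma~\ref{lem:canonical}, and (iii) reverse the defragmentation of $\Pi_2$ to reach it from $\Pi_2'$ in $O(nk)$ moves; since the recombination relation is symmetric, reversal is legitimate. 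Summing the three segments yields diameter $O(nk)$.

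The main obstacle is bookkeeping: at each step, confirming that the lemma hypotheses are genuinely met and that every intermediate partition remains a valid $(k,s)$-BCP throughout the $k$-move block, especially across the Lemma~\ref{lem:average}--Lemma~\ref{lem:singleton} tandem where a transient singleton district appears. A minor subtlety is that Lemma~\ref{lem:canonical} is formally stated for $k \mid n$; for general $n$ the canonical notion and the cyclic-shift argument extend verbatim, with the slack $s \geq n/k$ comfortably absorbing rounding. The sizes of all intermediate districts stay in $[1, 2n/k] \subseteq [1, n/k + s]$, so we never leave $\mathcal{R}_s(G,k)$.
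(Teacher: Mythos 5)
Your proof is correct and follows essentially the same route as the paper: defragment each partition using Lemmas~\ref{lem:light}--\ref{lem:singleton} in $O(nk)$ moves, then connect the two canonical partitions via Lemma~\ref{lem:canonical}. The potential-function framing with $\phi = \sum_i f_i$ is a slightly cleaner way to state what the paper phrases as a while-loop on fragment count, but the case analysis, the $k$-moves-per-fragment bound, and the final $O(nk) + O(k^2) = O(nk)$ tally are identical. Your parenthetical ``(contrapositive of) Lemma~\ref{lem:else}'' also correctly identifies that the lemma's hypothesis as printed (``if a light fragment of a large district \emph{is} adjacent to some small district'') should read as its negation to match both its proof-by-contradiction and its use in the algorithm; this is an apparent typo in the paper that you handled properly.
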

\begin{proof}
Based on Lemmas~\ref{lem:light}--\ref{lem:singleton}, the following algorithm successively reduces the number of fragments to $k$, thereby transforming any balanced $k$-partition to a canonical partition.\\
While the number of fragments is more than $k$, do:
\begin{enumerate}
\item If a fragment of a small {\distr} is adjacent to a light fragment of a large {\distr} along $C$, then apply the recombination move in Lemma~\ref{lem:light}, which decreases the number of fragments.
\item Else, by Lemma~\ref{lem:else}, there are two adjacent {\distr s} along $C$ whose combined size is at most $2n/k$. Apply a recombination move in Lemma~\ref{lem:average}. If this move does not decrease the number of fragments, it creates a singleton {\distr}, and then up to $k-1$ recombination moves in Lemma~\ref{lem:singleton} decrease the number of fragments by at least one.
\end{enumerate}
There can be at most $n$ different fragments in a $k$-partition of a set of $n$ vertices.
We can reduce the number of fragments using up to $k$ recombination moves. Overall, $O(nk)$ recombination moves can bring any two $(k,s)$-BCPs to canonical form, which are within $k^2+1$ moves apart by Lemma~\ref{lem:canonical}.\ifthenelse{\boolean{lncs}}{\hfill$\Box$}{}
\end{proof}

\section{Disconnected Configuration Space}
\label{sec:disconnected}

In this section we show that the configuration space is not always connected, even in Hamiltonian graphs.
Specifically, we show the following result:

\both{
\begin{theorem}
\label{thm:neg-example}
	For any $k\geq 4$ and $s>0$ there exists a Hamiltonian planar graph $G$ of $n=k(3s+2)$ vertices such that $\mathcal{R}_s(G,k)$ is disconnected.
\end{theorem}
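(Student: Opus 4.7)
My strategy is to build $G$ as a ``necklace'' of $k$ rigid gadgets and exhibit an invariant on $\Bal_s(G,k)$ that distinguishes two $(k,s)$-BCPs while being preserved by every recombination move. First I would define a single gadget $H$ on $3s+2$ vertices (so that $n=k(3s+2)$ as required), planar and with a Hamilton path between two designated ``port'' vertices $p,q$. The crucial feature of $H$ is a short chord or dense core in the middle that creates a ``rigid block'' $M\subseteq V(H)$ of $s+1$ vertices with the property that any connected subset of $V(G)$ of size in $[2s+2,4s+2]$ meeting $M$ must contain all of $M$. I would then glue $k$ copies $H_1,\dots,H_k$ into a cycle by adding edges $q_iP_{i+1}$ (indices mod $k$), which keeps $G$ planar and Hamiltonian (its Hamilton cycle is the concatenation of the port-to-port Hamilton paths of the $H_i$).

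Next I would exhibit two candidate partitions. Let $A$ be the ``aligned'' partition whose $i$-th district is exactly $V(H_i)$, and let $B$ be obtained from $A$ by a cyclic shift of $s+1$ vertices along the global Hamilton cycle, so that the $i$-th district of $B$ contains the rigid block $M_{i+1}$ of the next gadget. Both partitions consist of connected contiguous arcs of size exactly $n/k=3s+2$, hence are $(k,s)$-BCPs. To separate them in the configuration space, I would define an invariant $\phi:\Bal_s(G,k)\to \mathbb{Z}/k\mathbb{Z}$ as follows: for each district $V_i$ let $\mu(V_i)$ be the index $j\in\{1,\dots,k\}$ of the unique rigid block $M_j$ contained in $V_i$ (uniqueness is a consequence of the rigidity property together with $|V_i|\le 4s+2 < 2(s+1)+(s+1)$, so at most one full rigid block fits), and set $\phi(\Pi)=\sum_i \mu(V_i)\bmod k$. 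A direct calculation then gives $\phi(B)-\phi(A)\equiv k(s+1)/\text{(something)}\not\equiv 0\pmod k$ after a careful choice of the shift, and $\phi(A)\neq \phi(B)$.

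The core step, and the main obstacle, is proving that every recombination move preserves $\phi$. Consider a move that replaces adjacent districts $V_i,V_j$ with $V_i',V_j'$. The slack bound $s<n/(3k)$, equivalently $4s+2<3(s+1)+1$ in our setting, forces $|V_i|+|V_j|\le 2(n/k+s)$ to be small enough that $V_i\cup V_j$ contains at most two rigid blocks, one from each of the two gadgets it spans. The lower bound $|V_i'|,|V_j'|\ge n/k-s>2s$ together with the rigidity lemma then forces each of $V_i',V_j'$ to contain exactly one of these two rigid blocks, in the same pairing (up to reindexing) as $V_i,V_j$. Consequently $\{\mu(V_i'),\mu(V_j')\}=\{\mu(V_i),\mu(V_j)\}$ as multisets, and $\phi$ is unchanged. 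Hence $A$ and $B$ lie in different components of $\mathcal{R}_s(G,k)$.

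The bulk of the technical work therefore falls into two places I expect to be delicate: establishing the rigidity lemma for $H$ (showing that any connected subset of the allowed size range that meets $M$ must swallow all of $M$), and a clean case analysis of how a combined pair $V_i\cup V_j$ can sit relative to the gadget boundaries so that the pairing of rigid blocks with new districts is truly forced. Once those are in place the winding-number style argument for $\phi$ is straightforward, and the theorem follows.
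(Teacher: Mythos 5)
Your proposed invariant $\phi(\Pi)=\sum_i \mu(V_i)\bmod k$ is vacuous, and this breaks the proof at its core. By your own rigidity lemma, every district in a $(k,s)$-BCP contains exactly one rigid block $M_j$, the $M_j$ are pairwise disjoint, and they are all covered by the partition, so $V_i\mapsto\mu(V_i)$ is a \emph{bijection} from the $k$ districts onto $\{1,\dots,k\}$. Since a partition is an unordered family, $\phi(\Pi)=\sum_{j=1}^k j = k(k+1)/2 \bmod k$ for every valid $\Pi$, independently of the partition. Hence $\phi(A)=\phi(B)$ no matter how you shift, and the claimed ``direct calculation'' giving $\phi(B)-\phi(A)\not\equiv 0$ cannot succeed. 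To salvage the idea you would need some canonical labelling of districts (say, the district that contains the port $p_i$), turn $\mu$ into a permutation $\sigma$, and use a signature of $\sigma$ that is actually changed by the shift and preserved by recombination; none of this is present, and whether such a signature exists for your gadget is not obvious, since a recombination can permute two districts' rigid blocks only in a correlated way that tends to wash out parity/winding-number invariants.

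The paper's construction and invariant are quite different. Rather than a necklace of identical gadgets with both endpoint partitions being contiguous arcs, the paper takes a single Hamilton cycle with four chords and starts from a partition in which \emph{every} district consists of two arcs joined by its own private chord, i.e.\ every district is ``split'' (the chord is a cut edge of the induced subgraph). The invariant is the structural property ``all districts are split,'' and the proof shows (by a short counting argument plus a finite case analysis on which chord could become redundant) that a single recombination can never unsplit even one district, so this partition cannot reach any canonical arc partition. Because the two endpoint partitions have qualitatively different structure (split vs.\ contiguous), there is no need for a numerical invariant at all. If you want to pursue your route, you must replace $\phi$ with a genuinely nontrivial invariant, or—more in the spirit of the paper—make one of your two endpoint partitions structurally ``trapped'' rather than a mere cyclic shift of the other.
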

}

	\begin{figure}[htbp]
		\centering
		\includegraphics[width=0.6\linewidth]{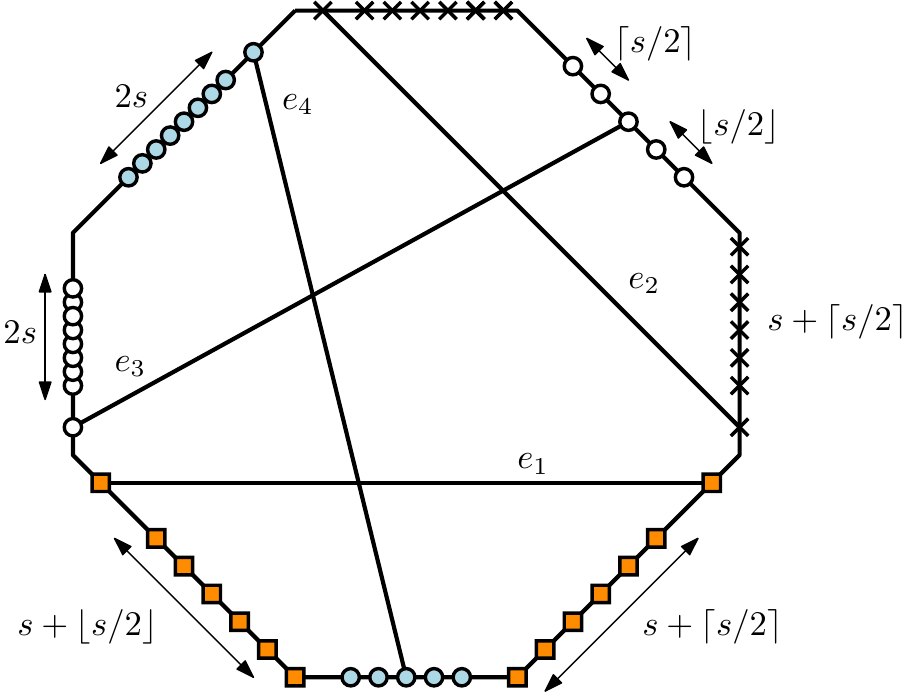}
		\caption{Problem instance showing that $\mathcal{R}_s(G,k)$ is not always connected (for $k=4$, $n=56$ and $s=4 = \frac{n}{3k}-O(1)$).}
		\label{fig:disconnected}
	\end{figure}

\later{
\begin{proof}	
	For simplicity, we first argue about the construction for $k=s=4$ and thus $n=56$ (an extension to larger values of $k$ is given afterwards). Our instance consists of a cycle (shown as an octagon in Fig.~\ref{fig:disconnected}) and four additional edges $e_1, \ldots ,e_4$ called \emph{chords}. 
	Note that $G$ is planar despite the non-planar drawing in Fig.~\ref{fig:disconnected} (we may draw $e_1$ and $e_3$ in the outer face).
All four {\distr s} initially have $n/4=14$ vertices.
	
	Note that, if we were to remove any of the chords, the corresponding {\distr} would be split into two connected components. The removal of the chords $e_1,\ldots , e_4$ would split two of the {\distr s} (marked with circles in Fig.~\ref{fig:disconnected})
into two components of sizes $1+s$ and $1+2s$, resp., and it would split the other two {\distr s} into two components of the same size within a vertex difference.

When a chord $e$ is critical for the connectivity of the induced graph $G[V_i]$ of a {\distr} $V_i$, we say that $V_i$ is \emph{split} (by $e$). 	
As noted above, all four {\distr s} are split in the initial partition. We claim that no sequence of recombinations can change this fact. Assume, for the sake of contradiction, that after a sequence of recombinations, one of the {\distr s} $D$ is not split. Consider the last recombination before this happens.

This recombination involves two {\distr s} $V_i$ and $V_j$ that are split via edges $e_i$ and $e_j$, respectively. 
The union of the two {\distr s} would have three components (after removing $e_i$ and $e_j$). 
Thus, at least one of the two {\distr s} must remain split after the recombination (that is, we cannot ``unsplit" both {\distr s} in one recombination).

Next, we show that not even one {\distr} can become unsplit. We distinguish between four cases depending on the edge that initially split a {\distr}, but it no longer does so after the recombination:
	
\begin{itemize}
		\item We first argue that this edge can be neither $e_1$ nor $e_2$. We first argue for $e_1$. By the assumption that $D$ is the first unsplit {\distr}, the other chords ($e_2$, $e_3$, and $e_4$) each split other {\distr s}. In particular, $D$ is contained in a contiguous arcs of $C$ that does not include an endpoint of any other chord. Since $e_1$ was involved in the recombination, we may assume that $D$ is contained in the lower right or lower left portion of Fig~\ref{fig:disconnected}, between the edges of $e_3$ and $e_4$ or $e_2$ and $e_4$. In either case, $D$ can contain at most $2s+1$ vertices. Since $n/k = 3s+2$, this is not possible unless the slack was at least $s+1$. Edge $e_2$ can be ruled out with an symmetric argument.
		
		\item Edge $e_3$ is ruled out using a similar argument. 
		Assume again that $D$ is the first unsplit {\distr}, and the other three chords each split other {\distr s}. Then $D$ is contained in a contiguous arc of $C$ between  endpoints of $e_1$, $e_2$, or $e_4$, and this arc contains an endpoint of $e_3$.
		Assume first that $D$ is in an arc containing the left endpoint of $e_3$. Then 
		the right endpoint of $e_3$ is in a {\distr} $D'$ that contains both endpoints of $e_2$. However, $D'$ contains an arc of $C$ between the endpoints of $e_2$, 
		and $e_2$ does not split any {\distr}, a contradiction.
		Next assume that $D$ is in the arc containing the right endpoint of $e_3$. 
		Then $D$ is contained in an arc between two endpoints of $e_2$, and so $D$ is adjacent to the {\distr} split by $e_2$. The union of these two {\distr s} lies in a continuous arc of $4s+3$ vertices. However, the minimum size of two {\distr s} is $2(n/k-s)=2(2s+2)$; a contradiction. 
		By symmetry, we also rule out $e_4$, and thus we obtain that all {\distr s} must be split at all times.
	\end{itemize}
	\ifthenelse{\boolean{lncs}}{}{\newpage}
	
Finally, it remains to show how to extend the proof for larger values of $k$. 
Observe that in the proof we never looked at the upper left side of the octagon (where we have a cluster of $2s$ vertices of one {\distr} followed by $2s$ vertices of another {\distr}). 
For each additional {\distr} we need, we can simply place $2+3s$ consecutive vertices. Those {\distr s} can do local recombinations, but they will not prevent the initial four {\distr s} from being split.\ifthenelse{\boolean{lncs}}{\hfill$\Box$}{}
\end{proof}
}
\ifthenelse{\boolean{lncs}}{
\proofsketch{
The proof is constructive and can be found in Appendix~\ref{sec:omitted}. We construct an instance that consists of a cycle and $4$ chords (shown in Fig.~\ref{fig:disconnected}). Each {\distr} consists of two contiguous arcs along the cycle, which are connected by a unique chord. We prove that no sequence of recombinations can change this fact. 
Indeed, the chords are sufficiently limiting that a {\distr} can only gain/lose vertices in a very restricted fashion (e.g., the {\distr} of represented by orange squares can gain up to $s$ vertices from the {\distr} of blue circles). \hfill$\Box$
%\mati{wrote one. Please proofread} 
}
}{}

\section{Hardness Results}
\label{sec:hardness}

This section presents our hardness results.
Our reductions are from Nondeterministic Constrained Logic (NCL) reconfiguration which is PSPACE-complete~\cite{PSPACE,PSPACE-book}.
An instance of NCL is given by a planar cubic undirected graph $G_{NCL}$ where each edge is colored either red or blue.
Each vertex is either incident to three blue edges or incident to two red and one blue edges.
We respectively call such vertices \emph{OR} and \emph{AND} vertices.
An orientation of $G_{NCL}$ must satisfy the constraint that at every vertex $v \in V(G_{NCL})$, at least one blue edge or at least two red edges are oriented towards $v$.
A \emph{move} is an operation that transforms a satisfying orientation to another by reversing the orientation of a single edge.
The problem gives two satisfying orientations $A$ and $B$ of $G_{NCL}$ and asks for a sequence of moves to transform $A$ into $B$.
As in~\cite{akitaya2019reconfiguration}, we subdivide each edge in $G_{NCL}$ obtaining a bipartite graph $G_{NCL}'$ with one part formed by original vertices in $V(G_{NCL})$ and another part formed by degree-2 vertices.
We require that an orientation must additionally satisfy the constraint that each degree-2 vertex $v$ must have an edge oriented towards $v$.
The question of whether there exists a sequence of moves transforming orientation $A'$ into $B'$ of $G_{NCL}'$ remains PSPACE-complete.
We follow the framework in~\cite{akitaya2019reconfiguration} with a few crucial differences.
The main technical challenge is dealing with the slack constraints while maintaining the desired behavior for the gadgets.
We first describe the reduction to instances of our problem with slack equals zero.
We then generalize the proofs.

\subsection{Zero Slack}
\label{sec:hardness-0}

\ifthenelse{\boolean{lncs}}{\later{\section{Omitted Proofs from Section~\ref{sec:hardness-0}}\label{sec:hard-append}}
}{}

In the following reduction, we are given a bipartite instance of NCL given by $(G_{NCL}', A', B')$, and we produce an instance of BR$(G, k, s)$ of the balanced recombination problem consisting of two $(k,s)$-BCP of a planar graph $G$, $\Pi_A$ and $\Pi_B$, with $k = O( |V(G_{NCL})| )$ {\distr s}, and slack $s=0$. %, and so that $G$ is a planar graph.
\ifthenelse{\boolean{lncs}}{
Here give a brief overview of the reduction. Details can be found in Appendix~\ref{sec:hard-append}.

The AND, OR and degree-2 gadgets are shown in Figures~\ref{fig:gadgets}~(a), (b) and (c) respectively.
The green (black) dots are called \emph{heavy} (\emph{light}) vertices and are considered to be weighed with integer weight more than one (equal to one). We can implement weights by attaching an appropriate number of degree-1 vertices to a heavy vertex so that, in order for a $k$-BCP to be connected, whichever {\distr} contains the heavy vertex must also contain all degree-1 vertices attached to it.
Every edge $e\in E(G_{NCL}')$ is represented by two light vertices of $G$, $e^+$ and $e^-$, that belong to two neighboring gadgets as shown in Figures~\ref{fig:gadgets}~(d).
}{}

\later{
\smallskip\noindent\textbf\paragraph{Construction.}
We first describe a building block used in our gadgets, called \emph{heavy vertices}, represented by green dots in Figure~\ref{fig:gadgets}.
Each heavy vertex $q$ is associated with a positive integer \emph{weight} $w(q)$ and represents a vertex that we also call $q$, slightly abusing notation, attached to $w(q)-1$ degree-1 vertices.
The property that we exploit is that whichever {\distr} that contains vertex $q$ must also contain all $(w(q)-1)$ degree-1 vertices attached to it or else the {\distr} containing one of such vertices would be disconnected.
Then, in practice we can consider the $w(q)$ vertices represented by the heavy vertex $q$ as a single vertex $q$ with weight $w(q)$ towards the size of the {\distr} containing it.
The ordinary vertices (i.e., vertices of weight 1) are called \emph{light}.
We now define the AND and OR gadgets corresponding to the vertices of $G_{NCL}$ and the degree2- gadget corresponding to the degree-2 vertices of $G_{NCL}'$.
Let $\alpha$ be a positive integer to be determined.
The \emph{AND gadget} is shown in Figure~\ref{fig:gadgets}~(a) made of 6 light vertices and 3 heavy vertices.
We set $w(v_a) = w(v_b) = \alpha$ and $w(v_c) = 8 \alpha-3$.
The \emph{OR gadget} is shown in Figure~\ref{fig:gadgets}~(b) made of 6 light vertices and 7 heavy vertices.
We set $w(v_a) = w(v_b) = w(v_a') = w(v_b') = w(v_c') = \alpha$, $w(v_c) = 6 \alpha - 3$, and $w(v') = 9 \alpha$.
The \emph{degree-2 gadget} is shown in Figure~\ref{fig:gadgets}~(c) made of 4 light vertices and 2 heavy vertices.
We set $w(v_a) = w(v_b) = 5 \alpha - 1$.
Each edge $e$ of $G_{NCL}'$ is represented by two vertices, $e^-$ and $e^+$, that are shared by the two gadgets corresponding to the vertices incident to $e$, as shown in Figure~\ref{fig:gadgets}~(d).
For each vertex of $G_{NCL}'$, create a corresponding gadget identifying the vertex pairs 
$e^-$ and $e^+$ as shown in the figure.
This concludes the construction of $G$.
We set $\alpha = 5$ so that the maximum weight of heavy vertices in AND/OR gadgets is greater than $5\alpha-1$.

\,%\csaba{Extra paragraph, otherwise linno skips the previous paragraph.}
}

\begin{figure}[h]
	\centering
	\ifthenelse{\boolean{lncs}}
	{\includegraphics[width=\linewidth]{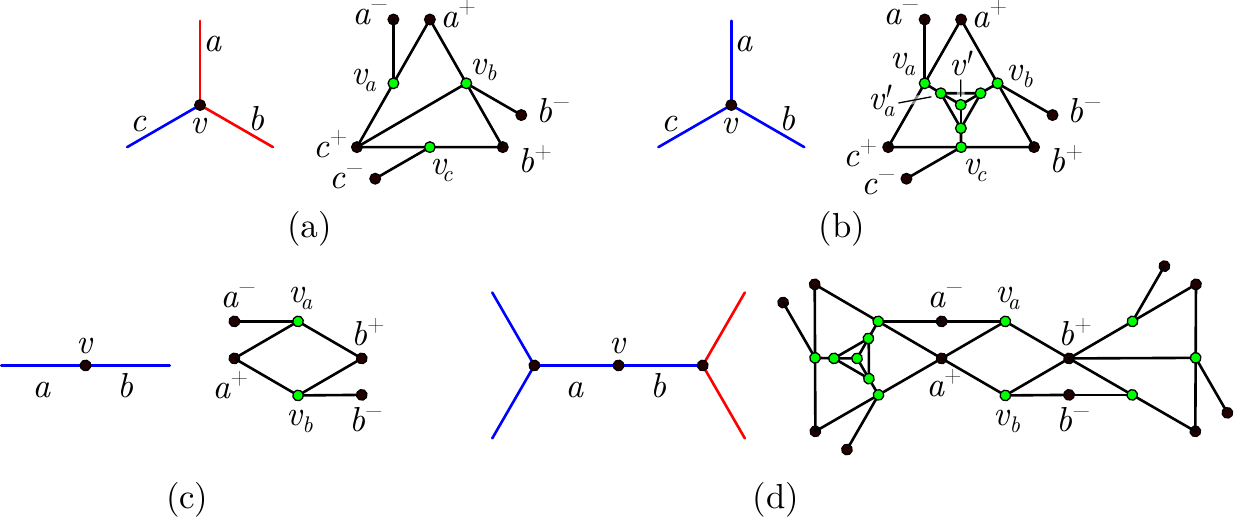}}
	{\includegraphics[width=0.8\linewidth]{gadgets}}
	\caption{Gadgets for the reduction from NCL to BR$(G,k,0)$.}
	\label{fig:gadgets}
\end{figure}

\ifthenelse{\boolean{lncs}}{
The weights of heavy vertices are set up so that, for each AND or OR gadget, a {\distr} must contain its heavy vertices $v_a$, $v_b$ and $v_c$ and no heavy vertices of neighbor degree-2 gadgets.
Then, for all degree-2 gadget, there is a {\distr} that contains $v_a$ and $v_b$ and no other heavy vertex.
Additionally, for all OR gadgets, a {\distr} must contain $v'$ and exactly one vertex in $\{v_a', v_b', v_c'\}$.
We encode whether an edge $e$ points toward a vertex by whether a {\distr} of the corresponding gadget contains $e^+$.
Then, the connectivity constrains of the {\distr s} simulate the NCL constraints. See Figure~\ref{fig:assignment}.
}{}

\later{
We now define $\Pi_A$ and $\Pi_B$ based on $A'$ and $B'$ respectively.
We describe the construction of a set $\mathcal{M}_X$ of $(k,0)$-BCPs of $G$ for an arbitrary satisfying orientation $X$ of $G_{NCL}'$.
Every {\distr} will have $10 \alpha$ vertices.
Refer to Figure~\ref{fig:assignment}.
For every vertex $v \in G_{NCL}'$ create a {\distr} $D_v$ containing $v_e$ for every edge $e$ incident to $v$.
Add $e^+$ ($e^-$) to $D_v$ if $e$ is directed towards (away from) $v$ in $X$.
That concludes the description for AND and degree-2 gadgets.
If $v$ is an OR vertex, let $a$, $b$ and $c$ be the edges incident to $v$.
Since $X$ satisfies the NCL constraints, at least one edge is directed towards $v$.
Without loss of generality, let $a$ be directed towards $v$ in $X$.
Let $v_1$ be an arbitrary vertex in $\{v_a', v_b'\}$.
Then, add the vertices in $\{v_a', v_b', v_c'\} \setminus \{v_1\}$ to $D_v$ and create a {\distr} $D_{v'} = \{v', v_1\}$.
Note that if there are multiple edges directed towards $v$ in $X$, then there are multiple options of edges to be labeled ``$a$", and that $v_1$ is chosen arbitrarily.
Let $\mathcal{M}_X$ be the set of all $k$-partitions that can be constructed from $X$ with the procedure above.
Lemma~\ref{lem:gadget-correspondence} will show that the members of $\mathcal{M}_X$ are indeed $(k,0)$-BCPs.
We choose $\Pi_A \in \mathcal{M}_{A'}$ and $\Pi_B \in \mathcal{M}_{B'}$ arbitrarily.
By construction, $k = |V(G_{NCL}')| + n_{OR}$, where $n_{OR}$ is the number of OR vertices.
This concludes the construction.
Note that given a $k$-partition $\Pi \in \mathcal{M}_{X}$, the above description implies a method to obtain a corresponding orientation $X$ of $ G_{NCL}' $.
Namely, for every degree-2 (AND or OR) gadget, if there is a {\distr} $D_v$ in $\Pi$ containing both $\{v_a, v_b\}$ (all vertices in $\{v_a, v_b, v_c\}$) orient $e$ towards $v$ if and only if $e^+ \in D_v$, $e \in \{a, b\}$ ($e \in \{a, b, c\}$).

\, %\csaba{Extra paragraph, otherwise linno skips the previous paragraph.}
}

%The correspondence is not one-to-one only because there might be multiple edges directed towards an OR vertex $v$, thus there might be multiple valid choices of edges to label ``$a$" in the algorithm.

\begin{figure}[h]
	\centering
	\ifthenelse{\boolean{lncs}}
	{\includegraphics[width=0.8\linewidth]{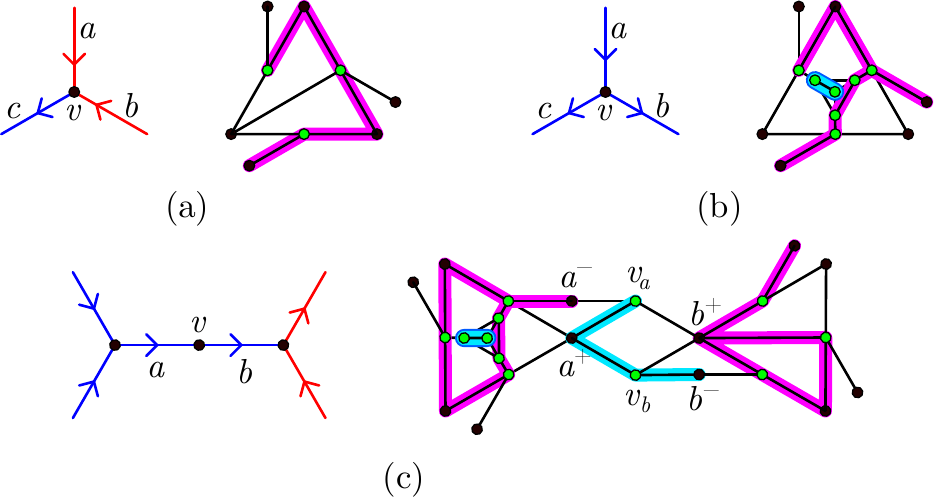}}
	{\includegraphics[width=0.7\linewidth]{assignment}}
	\caption{Equivalence between a satisfying orientation of $G_{NCL}'$ and a $k$-BCP of $G$.}
	\label{fig:assignment}
\end{figure}

\later{
\smallskip\noindent\textbf{Correctness.}
We show that there exist a correspondence between a move in an orientation of $G_{NCL}'$ and a recombination in a $(k,0)$-BCP of $G$.

\ifthenelse{\boolean{lncs}}{}{\newpage}

\begin{lemma}
	\label{lem:gadget-correspondence}
	A $k$-partition $\Pi \in \mathcal{M}_X$ is a $(k,0)$-BCP if and only if $X$ satisfies the NCL constraints.
\end{lemma}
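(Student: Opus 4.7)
My plan is to reduce the lemma to a per-gadget connectivity check. First, I will verify that the weight balance is automatic: for every orientation $X$ (satisfying or not), each set $D_v$ and each $D_{v'}$ produced by the construction has total weight exactly $10\alpha$. This follows by routine arithmetic from the weights assigned above --- for instance, in the AND gadget $w(v_a) + w(v_b) + w(v_c) + 3 = \alpha + \alpha + (8\alpha - 3) + 3 = 10\alpha$, and analogous sums handle the OR and degree-$2$ gadgets. Thus the size constraint of a $(k,0)$-BCP is satisfied automatically, and the entire lemma reduces to verifying the connectivity condition.

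The core of the proof will be three local equivalences, one per gadget type, linking connectivity of $D_v$ to the NCL constraint at $v$. Using the adjacency structure in Figure~\ref{fig:gadgets}, I intend to show: for an AND gadget at $v$ with blue edge $c$ and red edges $a,b$, the induced subgraph $G[D_v]$ is connected iff $c^+ \in D_v$ or $\{a^+, b^+\} \subseteq D_v$ (the AND constraint); for an OR gadget at $v$ with blue edges $a, b, c$, both $G[D_v]$ and $G[D_{v'}]$ are connected iff at least one of $\{a^+, b^+, c^+\}$ lies in $D_v$ (the OR constraint); and for a degree-$2$ gadget at $v$, $G[D_v]$ is connected iff at least one of $\{a^+, b^+\}$ lies in $D_v$ (the degree-$2$ constraint). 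Each equivalence is proved by inspecting the finite gadget: including only the ``away'' endpoints $e^-$ disconnects the heavy vertices from one another, while any ``toward'' endpoint $e^+$ supplies a connecting path between them.

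With these three equivalences, the lemma follows immediately in both directions: the $k$-partition $\Pi \in \mathcal{M}_X$ has all {\distr s} connected iff the NCL constraint holds at every vertex of $G_{NCL}'$, iff $X$ is a satisfying orientation. The main obstacle I anticipate is the OR case, where the construction involves an arbitrary choice of $v_1 \in \{v_a', v_b'\}$ and an auxiliary {\distr} $D_{v'} = \{v', v_1\}$. I will need to argue that if at least one blue edge is oriented toward $v$ then some choice of $v_1$ makes both $D_v$ and $D_{v'}$ connected, and that no choice of $v_1$ keeps $D_v$ connected when all three blue edges point away from $v$. A small case analysis on which subset of $\{a^+, b^+, c^+\}$ lies in $D_v$ will settle this.
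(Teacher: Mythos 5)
Your plan is correct and follows essentially the same route as the paper's proof: you reduce the lemma to per-gadget connectivity equivalences, and you observe (correctly) that a small case analysis on the OR gadget's auxiliary {\distr} $D_{v'}$ is the only delicate point. One worthwhile addition you make that the paper's proof skips over is the explicit arithmetic check that every $D_v$ and $D_{v'}$ has total weight exactly $10\alpha$ regardless of $X$; the paper asserts this only implicitly, so spelling it out is a genuine improvement. A minor note: in your AND equivalence you correctly require $\{a^+,b^+\}\subseteq D_v$, whereas the paper's text contains an apparent typo ($a^+$ and $b^-$); your version is the intended one.
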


\begin{proof}
	Consider an AND vertex $v$ incident to edges $a$, $b$, and $c$ as in Figure~\ref{fig:gadgets}~(a).
	By definition $D_v$ contains $v_e$ and exactly one of $\{e^+, e^-\}$ for $e \in \{a, b, c\}$.
	If $G[D_v]$ is connected, then $D_v$ must contain either $c^+$, or $a^+$ and $b^-$.
	Then, either $c$ is directed towards $v$ or both $a$ and $b$ are directed towards $v$ in $X$.
	Thus the NCL constraint for $v$ is satisfied.
	The other direction is analogous.
	
	Similarly, we can show the equivalence for degree-2 vertices.
	Consider an OR vertex $v$ incident to edges $a$, $b$, and $c$ as in Figure~\ref{fig:gadgets}~(b).
	Recall that $D_{v'}$ contains $v'$ and exactly one heavy vertex in $\{v_a', v_b', v_c'\}$.
	Then, $G[D_v]$ contains a single path between two vertices in  $\{v_a, v_b, v_c\}$ using vertices in $\{v_a', v_b', v_c'\}$, i.e., $D_v$ induces two components in $\{v_a, v_b, v_c, v_a', v_b', v_c'\}$.
	Since $\{v_a, v_b, v_c\}\subset D_v$ by definition, $D_v$ must contain at least one vertex in $\{a^+, b^+, c^+\}$.
	Thus the NCL constraint for $v$ is satisfied.
	The other direction is analogous.\ifthenelse{\boolean{lncs}}{\hfill$\Box$}{}
\end{proof}

The following lemma shows that the subgraph of the configuration space $\mathcal{R}_0(G,k)$ induced by $ \mathcal{M}_X $ is connected if $X$ satisfies the NCL constraints.

\begin{lemma}
	\label{lem:M-connected}
	Given a satisfying orientation $X$ of $G_{NCL}'$ and two $k$-partitions $\Pi_1, \Pi_2 \in \mathcal{M}_X $, then there is a sequence of at most $n_{OR}$ recombinations that takes $\Pi_1$ to $\Pi_2$.
\end{lemma}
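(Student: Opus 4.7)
The plan is to show that two partitions in $\mathcal{M}_X$ can disagree only within OR gadgets, and that each such local disagreement can be fixed by a single recombination that stays inside that gadget.

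First, I would observe that the construction of a partition in $\mathcal{M}_X$ is deterministic outside of OR gadgets: every AND gadget and every degree-2 gadget yields a uniquely determined {\distr} (its heavy vertices are forced, and its light vertices are determined by how the incident edges are oriented in $X$), and the assignment of each edge vertex $e^+$ or $e^-$ to a {\distr} depends only on the orientation of $e$ in $X$. Consequently $\Pi_1$ and $\Pi_2$ agree on every vertex that does not lie in the interior of an OR gadget. Inside each OR gadget for a vertex $v$, they can differ only in the choice of the primed vertex $v_1 \in \{v_a', v_b', v_c'\}$ that forms $D_{v'} = \{v', v_1\}$, with the remaining two primed vertices joining $D_v$.

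For each OR gadget at $v$ where this choice differs between $\Pi_1$ and $\Pi_2$, the plan is to apply a single recombination of $D_v$ and $D_{v'}$ that replaces the current $v_1$ with the value prescribed by $\Pi_2$. Since $w(v_a') = w(v_b') = w(v_c') = \alpha$, the sizes of both {\distr s} remain $10\alpha$ after any such swap. The new $D_{v'} = \{v', v_1\}$ is connected because $v'$ is adjacent to each of $v_a', v_b', v_c'$ in the OR gadget of Figure~\ref{fig:gadgets}~(b). The new $D_v$ remains connected by the structure of the OR gadget: the three heavy vertices $v_a, v_b, v_c$ together with any two of $\{v_a', v_b', v_c'\}$ and the edge vertices mandated by $X$ induce a connected subgraph, via the adjacencies $v_x \leftrightarrow v_x'$ inside the gadget and the fact that at least one edge vertex $e^+$ with $e$ oriented towards $v$ is present in $D_v$ (because $X$ satisfies the NCL constraints at $v$).

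After each such recombination, the resulting partition agrees with $\Pi_2$ on the reconciled gadgets and with $\Pi_1$ elsewhere, so it belongs to $\mathcal{M}_X$ and is therefore a valid $(k,0)$-BCP by Lemma~\ref{lem:gadget-correspondence}. Iterating over all $n_{OR}$ OR gadgets of $G_{NCL}'$ gives a sequence of at most $n_{OR}$ recombinations transforming $\Pi_1$ into $\Pi_2$, as claimed. The main obstacle will be the connectivity verification for the new $D_v$ after the swap: this requires a careful reading of the internal adjacencies of the OR gadget in Figure~\ref{fig:gadgets}~(b), together with the observation that the set of possible choices of $v_1$ is exactly the set compatible with the orientation $X$, so every intermediate choice produced by the algorithm is itself a legal configuration in $\mathcal{M}_X$.
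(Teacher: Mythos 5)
Your proposal matches the paper's proof: both observe that partitions in $\mathcal{M}_X$ can differ only in the choice of which primed vertex accompanies $v'$ inside each OR gadget, fix each disagreement by a single recombination between $D_{v'}$ and $D_v$, and invoke Lemma~\ref{lem:gadget-correspondence} to certify that every intermediate partition (still in $\mathcal{M}_X$) is a valid $(k,0)$-BCP. Your extra hand-verification of the sizes and local connectivity is consistent with, but subsumed by, that appeal to Lemma~\ref{lem:gadget-correspondence}.
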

\begin{proof}
	By construction, $\Pi_1$ and $\Pi_2$ are identical except at gadgets of OR vertices.
	Let $v$ be an OR vertex where $\Pi_1(v') \neq \Pi_2(v')$.
	Then recombine $\Pi_1(v')$ and $\Pi_1(v_a)$ into $\Pi_2(v')$ and $\Pi_2(v_a)$.
	The resulting $k$-partition is in $\mathcal{M}_X$ and by Lemma~\ref{lem:gadget-correspondence} its {\distr s} induce connected subgraphs.
	By repeating this procedure at all OR gadgets whose {\distr s} differ from the ones in $\Pi_2$ we eventually get $\Pi_2$.
	There can be at most $n_{OR}$ recombinations since this is the number of OR gadgets.\ifthenelse{\boolean{lncs}}{\hfill$\Box$}{}
\end{proof}

The following lemma establishes the equivalence between a recombination move in $G$ and an edge reversal move in $G_{NCL}'$.

\begin{lemma}
	\label{lem:flip-recomb}
	For every $(k,0)$-BCP $\Pi_1 \in \mathcal{M}_{X_1}$ obtained from a satisfying orientation $X_1$ on $G_{NCL}'$, every recombination on $\Pi_1$ yields a $(k,0)$-BCP $\Pi_2$ such that $\Pi_2 \in \mathcal{M}_{X_2}$ where $X_2$ is an orientation on $G_{NCL}'$ that differs from $X_1$ by the orientation of at most a single edge.
	Similarly, for an orientation $X_2$ on $G_{NCL}'$ obtained from $X_1$ by reversing the orientation of a single edge, there is a sequence of one or two recombinations that takes any $\Pi_1 \in \mathcal{M}_{X_1}$ to some $\Pi_2 \in \mathcal{M}_{X_2}$.
\end{lemma}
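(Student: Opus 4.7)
The plan is to leverage the rigidity of $(k,0)$-BCPs forced by the heavy-vertex weights, reducing every recombination to a local, easily-classified operation on a single gadget pair. First I would establish a rigidity claim: in \emph{any} $(k,0)$-BCP $\Pi$ of $G$, the heavy vertices of every gadget are forced into their designated districts. Since each district has size exactly $10\alpha$, and with $\alpha=5$ every non-trivial reassignment of a heavy vertex between two adjacent gadget districts creates a discrepancy larger than the available $3+2=5$ light vertices can absorb, a short case analysis shows the only feasible distribution of weights is the canonical one: for every AND gadget the district contains $\{v_a,v_b,v_c\}$; for every degree-2 gadget the district contains $\{v_a,v_b\}$; for every OR gadget, $D_v$ contains $\{v_a,v_b,v_c\}$ and two of $\{v_a',v_b',v_c'\}$ while $D_{v'}$ contains $v'$ together with the third. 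Combined with the fact that exactly one of $\{e^+,e^-\}$ lies in each of the two gadgets meeting at edge $e$, this shows $\Pi\in\mathcal{M}_X$ for a uniquely determined orientation $X$ of $G_{NCL}'$.

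Next I would prove the forward direction. A recombination combines two adjacent districts; by rigidity these must either be (a) an AND/OR district $D_v$ together with an adjacent degree-2 district $D_u$ (sharing some edge $e=uv$), or (b) the pair $(D_v,D_{v'})$ inside a single OR gadget. In case (a), the weight argument above again forces the heavy vertices to remain in place, and connectivity pins every light vertex except the pair $e^+,e^-$ which are the only two vertices that can swap across the shared interface; this single swap is exactly the reversal of $e$'s orientation, so $\Pi_2\in\mathcal{M}_{X_2}$ with $X_2$ obtained from $X_1$ by reversing $e$. In case (b), the swap merely exchanges which $v_i'$ lies in $D_{v'}$, leaving every $e^\pm$ fixed, so $\Pi_2\in\mathcal{M}_{X_1}$ and $X_2=X_1$. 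Thus every recombination changes the orientation by at most one edge.

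For the reverse direction, I would start from any $\Pi_1\in\mathcal{M}_{X_1}$ and the edge $e=uv$ to reverse. If $v$ is an AND or a degree-2 vertex, a single recombination of $D_v$ with $D_u$ swapping $e^+$ and $e^-$ produces a partition whose two new districts are both connected, the latter exactly because $X_2$ satisfies the NCL constraint at both endpoints (Lemma~\ref{lem:gadget-correspondence}); this gives a one-step transition into $\mathcal{M}_{X_2}$. If $v$ is an OR vertex, then the current assignment of $D_{v'}=\{v',v_i'\}$ in $\Pi_1$ may be incompatible with connectivity of $D_v$ after the external swap. The remedy is to first perform at most one internal recombination (case (b) above) that exchanges $v_i'$ for some $v_j'$ such that $j$ is still the index of an edge oriented towards $v$ in $X_2$; such a $j$ exists because $X_2$ satisfies the OR constraint. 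The subsequent external swap of $e^+$ and $e^-$ then yields a $\Pi_2\in\mathcal{M}_{X_2}$, using at most two recombinations in total.

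The main obstacle I anticipate is the rigidity argument of Step 1, which is the technical fulcrum of everything else. The weight choices $\alpha=5$, $w(v_c)\in\{8\alpha-3,6\alpha-3\}$, $w(v')=9\alpha$, and $w(v_a)=w(v_b)=5\alpha-1$ in the degree-2 gadget were calibrated so that even the smallest imaginable heavy-vertex swap (e.g.\ trading a weight-$\alpha$ vertex between $D_v$ and $D_u$) creates an imbalance exceeding any compensation possible with the 5 nearby light vertices; completing this case analysis for all adjacent district pairs is where the bookkeeping is densest. Once rigidity is in hand, the forward direction is a short structural observation, and the reverse direction reduces to choosing the right preparatory swap inside an OR gadget, which is essentially a one-line argument given the OR constraint on $X_2$.
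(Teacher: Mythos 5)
Your proposal follows essentially the same route as the paper: the weight-rigidity of heavy-vertex placement, the classification of any recombination from $\mathcal{M}_{X_1}$ into a within-OR-gadget move (not changing the orientation) or a move across one $G_{NCL}'$ edge, and the one-or-two-move reverse construction with an extra preparatory recombination inside the OR gadget. The only slip is a labeling one in the reverse direction: since every edge $e=uv$ of $G_{NCL}'$ has exactly one degree-3 and one degree-2 endpoint, the case split must be on whether the \emph{degree-3} endpoint is AND or OR (your phrasing ``if $v$ is an AND or a degree-2 vertex'' would let $u$ be an OR vertex and bypass the needed preparatory move), but this is cosmetic and matches the paper's intent once corrected.
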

\begin{proof}
	We begin with proving the first claim.
	Let $D_p$ and $D_q$ be two {\distr s} in $\Pi_1$ 
	that are recombined into {\distr s} $V$ and $W$ of $\Pi_2$.
	First assume that $p=v$ is an OR vertex of $G_{NCL}'$, $D_v$ is the {\distr} containing $\{v_a, v_b, v_c\}$, and $q=v'$.
	Let $W$ be the {\distr} containing $v'$.
	Since $w(v') = 9 \alpha$ and $n/k = 10 \alpha$, then $W$ must contain exactly $v'$ and one heavy vertex in $\{v_a', v_b', v_c'\}$.
	Then $D_p$ and $V$ induces the same graph on $\{v_a, v_b, v_c, a^+, b^+, c^+\}$.
	Thus, we can read a graph orientation $X_2 = X_1$ from $\Pi_{2}$, i.e., both $\Pi_{1}$ and $\Pi_{2}$ are in $ \mathcal{M}_{X_1} $.
	Now, assume that $p$ is an AND/OR vertex of $G_{NCL}'$ adjacent to edges $a$, $b$, and $c$, $D_p$ is the {\distr} containing $\{p_a, p_b, p_c\}$, and $q$ is a degree-2 vertex of $G_{NCL}'$ adjacent to edges $c$ and $d$ with $D_q$ being the {\distr} containing $\{p_c, p_d\}$.
	Let $V$ be the {\distr} containing the heavier  vertex in $\{p_a, p_b, p_c\}$.
	By construction, the weight of such vertex is greater than $5 \alpha - 1$.
	Then $V$ cannot contain $\{p_c, p_d\}$ and $\{p_c, p_d\} \subset W$.
	Due to size constraints, $W$ cannot contain any other heavy vertex and $\{p_a, p_b, p_c\} \subset V$.
	Note that $D_p$ and $V$ ($D_q$ and $W$) differ only by a vertex in $\{c^-, c^+\}$.
	We can then read an orientation $X_2$ off of $\Pi_{2}$ that differs from $X_1$ only by the orientation of $c$.
	
	We proceed to prove the second claim.
	Let $c = pq \in E(G_{NCL}')$ be the edge whose orientation differs between $X_1$ and $X_2$.
	Let $p$ be an AND/OR vertex incident to $a, b, c \in E(G_{NCL}')$, and $q$ be a degree-2 vertex incident to $c, d \in E(G_{NCL}')$. 	
	Let $\Pi_1$ be an arbitrary $(k,0)$-BCP in $\mathcal{M}_{X_1}$.
	First, assume that $p$ is an AND vertex.
	Recombine $D_p, D_q \in \Pi_{X_1}$ into $D_p' = D_p \setminus \{c^-\} \cup \{c^+\}$ and $D_q' = D_q \setminus \{c^+\} \cup \{c^-\}$ obtaining $\Pi_{2}$.
	By Lemma~\ref{lem:gadget-correspondence}, the {\distr s} in $\Pi_{2}$ are connected and by construction $\Pi_{2} \in \mathcal{M}_{X_2}$.
	Now assume that $p$ is an OR vertex.
	If $c$ is directed towards $q$ in $X_1$, the same proof as above holds.
	Else, $c$ is directed towards $p$ in $X_1$ and since $X_2$ satisfies the NCL constraints there is an edge $e \in \{a, b\}$ directed towards $p$.
	If $D_{p'} \in \Pi_{X_1}$ does not contain $v_e'$, we can perform one recombination so that it does as in Lemma~\ref{lem:M-connected}.
	The remainder of the proof is the same as for the case when $p$ is an AND vertex.\ifthenelse{\boolean{lncs}}{\hfill$\Box$}{}
\end{proof}

The combination of Lemmas~\ref{lem:gadget-correspondence}, \ref{lem:M-connected}, and \ref{lem:flip-recomb} yield the following lemma.

}

\both{
\begin{lemma}
	\label{lem:hardness}
	BR$(G,k,0)$ is PSPACE-complete even for a planar graph $G$ with constant maximum degree.
\end{lemma}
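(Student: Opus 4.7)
The plan is to argue containment in PSPACE and PSPACE-hardness separately. Containment in PSPACE is routine: every $(k,0)$-BCP can be represented in $O(n\log k)$ bits, and both validity and the adjacency relation between partitions are checkable in polynomial time, so the standard nondeterministic reconfiguration algorithm uses polynomial space, and Savitch's theorem converts this to deterministic PSPACE.

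For hardness I would reduce from bipartite NCL reconfiguration. Given $(G_{NCL}',A',B')$, build $G$ by gluing the AND, OR, and degree-2 gadgets at the corresponding vertices of $G_{NCL}'$ and identifying each shared pair $(e^-,e^+)$ along the edges of $G_{NCL}'$, and pick arbitrary representatives $\Pi_A\in\mathcal{M}_{A'}$ and $\Pi_B\in\mathcal{M}_{B'}$ (both well-defined by Lemma~\ref{lem:gadget-correspondence}). The correctness is then entirely driven by the three preceding lemmas: Lemma~\ref{lem:gadget-correspondence} characterizes the valid $(k,0)$-BCPs in $\mathcal{M}_X$ as exactly those encoding NCL-feasible orientations; Lemma~\ref{lem:flip-recomb} shows that $\bigcup_X \mathcal{M}_X$ is closed under a single recombination and that each recombination corresponds to reversing at most one edge, and conversely that every legal NCL edge reversal can be simulated by one or two recombinations; Lemma~\ref{lem:M-connected} then lets us freely navigate between representatives inside a single $\mathcal{M}_X$ in at most $n_{OR}$ steps, so any NCL reversal sequence from $A'$ to $B'$ lifts to a recombination sequence from $\Pi_A$ to $\Pi_B$ and vice versa. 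The reduction is clearly polynomial time.

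The main obstacles I anticipate are the two ``even'' clauses of the statement. Planarity of $G$ should follow from the planarity of $G_{NCL}'$ once each gadget is drawn inside a small disk at its NCL vertex with the three interface pairs $(e^-,e^+)$ ordered consistently with the cyclic order of edges around that vertex. Bounded maximum degree is the subtler point, since the natural implementation of a heavy vertex $q$ of weight $w(q)$ by a star of $w(q)-1$ pendants gives $q$ itself unbounded degree. To fix this I would replace the star by a path of $w(q)-1$ new vertices hanging off $q$; this preserves the only property used throughout the argument---that any connected district containing $q$ must contain the entire attached path, since the path has no neighbors outside itself---while capping the degree of every vertex by the largest degree appearing in the unweighted gadgets of Figure~\ref{fig:gadgets}, which is a small constant independent of all weights. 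With this modification, $G$ is planar of constant maximum degree, and the reduction is complete.
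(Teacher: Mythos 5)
Your proposal is correct and takes essentially the same approach as the paper: the paper states Lemma~\ref{lem:hardness} as an immediate consequence of Lemmas~\ref{lem:gadget-correspondence}, \ref{lem:M-connected}, and \ref{lem:flip-recomb}, and you spell out exactly how those three combine, plus the routine PSPACE-membership argument that the paper leaves implicit. One small inaccuracy worth noting: in the zero-slack reduction $\alpha$ is fixed at $5$, so the heavy-vertex weights are all $O(1)$ and the star implementation already yields constant maximum degree; your claim that the star gives $q$ ``unbounded degree'' is false here. The path replacement you propose is nevertheless harmless---and is in fact precisely what the paper adopts later, in the generalization to slack $s\in O(n^{1-\eps})$, where $\alpha = \Theta(s)$ does grow and the star would genuinely blow up the degree---so you have anticipated a real issue, just for the wrong regime.
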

}

\subsection{Generalizations}
\label{sec:hardness-gen}

\ifthenelse{\boolean{lncs}}{
We generalize the reduction of Lemma~\ref{lem:hardness}.
Details are in Appendix~\ref{sec:app-hardness-gen}.

\smallskip\noindent\textbf{Bounded-degree triangulation $G$.}
The main new technical tool presented in this section is the \emph{filler gadget} shown in Figure~\ref{fig:3-connected}~(b).
Each face marked with a dot is called a \emph{heavy face} associated with an integer weight, and whose recursive construction is shown in Figure~\ref{fig:3-connected}~(a). 
Figure~\ref{fig:3-connected}~(c) shows how to use copies of the filler gadget to transform $G$ in a triangulation. 
The main property of the filler gadget is that we set the weights of heavy faces so that each red vertex must belong to a different {\distr} and the gadget only intersects 5 {\distr s}.
Then, such districts are ``trapped" in the filler gadget and don't interfere with the other gadgets.

\later{\section{Omitted Proofs from Section~\ref{sec:hardness-gen}}
\label{sec:app-hardness-gen}}
}{}

\later{
This section presents three generalizations of Lemma~\ref{lem:hardness}.
The second generalization subsumes the first as well as Lemma~\ref{lem:hardness}. 
However we present all three in this order for ease of exposition.

\smallskip\noindent\textbf{Larger slack.}
By simply adjusting the value of $\alpha = 5 + s$, we can generalize Lemma~\ref{lem:hardness} to larger slacks $s = O(n^{1-\eps})$ for any constant $0 < \eps \le 1$.
Indeed, any value $\alpha > 4 + s$ would work.
There is a trade-off between the size of the slack and the maximum degree of $G$ as the maximum degree is $\Theta(\alpha)$. Note the orientation of an edge $e$ is encoded by the membership of the corresponding vertex $e^+$ in the $(k,s)$-BCP of $G$.
In Section~\ref{sec:hardness-0}, $e^-$ was used to balance the size of the {\distr s};
and $e^-$ and $e^+$ were necessarily in a different {\distr s}.
When the slack $s$ is positive, this will no longer be the case, but as long as the {\distr s} in each gadget contain the same heavy vertices as in Section~\ref{sec:hardness-0}, we can use membership of $e^+$ to encode the orientation of $e$.
The key properties that we have to maintain for all $(k,s)$-BCPs $\Pi$ are the following:
\begin{enumerate}\itemsep 0pt
\item[(i)] for all $v \in V(G_{NCL}')$ of degree-2 incident to edges $a$ and $b$, $\Pi(v_a)=\Pi(v_b)$;
\item[(ii)] for $v \in V(G_{NCL}')$ of all degree-3 incident to edges $a$, $b$, and $c$, $\Pi(v_a)=\Pi(v_b)=\Pi(v_c)$; and
\item[(iii)] for all OR vertices $v \in V(G_{NCL}')$, $\Pi(v')$ contains exactly one heavy vertex other than $v'$.
\end{enumerate}

Property (iii) is clearly satisfied because $n/k$ remains $10 \alpha$ and $\Pi(v')$ cannot contain only $v'$ or else it would contain less than $n/k - s = 9 \alpha + 5$ vertices, and it also cannot contain 3 heavy vertices or it would contain more than $n/k + s = 11 \alpha - 5$ vertices.
Consider Property (ii).
Recall that $w(v_c) \ge 6 \alpha - 3$.
Then, $\Pi(v_c)$ cannot contain any heavy vertices of adjacent gadgets (of weight $5 \alpha - 1$) or it would be larger than the maximum allowed size.
Then $\Pi(v_c)$ must contain only heavy vertices in the gadget.
Indeed, all other available heavy vertices have weight $\alpha$.
Then, $\Pi(v_c)$ must contain all other available heavy vertices in the gadget (excluding $v'$ and one other in case $v$ is an OR vertex because of (iii)) or it would contain less then the minimum allowed  number of vertices.
Then, Property (i) becomes trivial given (ii) and (iii).
Note that this approach does not allow the slack to be linear in $n$.
By construction, the size of the instance becomes $n = |V(G)| = \Theta(s|V(G_{NCL})|)$.
We allow $s$ to be any polynomial in $|V(G_{NCL})|$ as long as the size of the output instance of the reduction is polynomial.
Note that $k$ is fixed for a given $G_{NCL}$. Then $k$ is bounded by $O(n/s)$.

\begin{corollary}
	\label{cor:hardness-slack}
	BR$(G,k,s)$ is PSPACE-complete even if $G$ is planar, $k \in O(n^\eps)$, and $s \in O(n^{1-\eps})$ for any constant $0 < \eps \le 1$.
\end{corollary}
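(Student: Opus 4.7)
The plan is to lift the reduction from Lemma~\ref{lem:hardness} (the zero-slack case) by rescaling the weights of the heavy vertices so that the allowed wiggle room $s$ never exceeds the ``weight gap'' that forces the correct gadget behavior. Concretely, I would keep the graph $G$, the gadget layouts, and the initial/target partitions $\Pi_A,\Pi_B$ exactly as in Section~\ref{sec:hardness-0}, but set the scaling parameter to $\alpha := s + O(1)$ (e.g.\ $\alpha = s + 5$) so that $n/k = 10\alpha = \Theta(s)$ while each {\distr} is allowed to vary in size by at most $s < \alpha$. The reduction is polynomial as long as $s$ is polynomial in $|V(G_{NCL}')|$, which is guaranteed by $s \in O(n^{1-\eps})$; and since each gadget contributes $\Theta(\alpha)=\Theta(s)$ vertices to $n$, we get $k = \Theta(|V(G_{NCL}')|) = \Theta(n/s)$, i.e.\ $k \in O(n^\eps)$, matching the claimed ranges.

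The substantive part is then to re-verify that in every $(k,s)$-BCP, the three structural properties (i)--(iii) listed in the excerpt still hold, so that the correspondence between recombinations and NCL edge-flips of Lemmas~\ref{lem:gadget-correspondence}--\ref{lem:flip-recomb} goes through unchanged. I would check them in the order (iii), (ii), (i). For (iii), the {\distr} $\Pi(v')$ of an OR gadget has size in $[10\alpha - s, 10\alpha + s]$, which with $s < \alpha$ forces exactly one ``satellite'' heavy vertex from $\{v_a',v_b',v_c'\}$ to join $v'$ (one gives $w(v') + \alpha = 10\alpha$, while zero or two satellites would land outside the allowed interval because the remaining heavy vertices have weight at least $\alpha$ each). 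For (ii), the key inequality is that $w(v_c) \ge 6\alpha - 3$ combined with any adjacent degree-2 heavy vertex of weight $5\alpha - 1$ already exceeds $11\alpha - 5 = n/k + s$, so no cross-gadget sharing of the ``central'' heavy vertices is feasible; the only way to reach at least $n/k - s = 9\alpha + 5$ inside a single gadget is to include all of $v_a,v_b,v_c$ (plus, in the OR case, the two non-chosen satellites). Property (i) for degree-2 gadgets then follows because $w(v_a)+w(v_b) = 10\alpha - 2$, so the remaining slack forces $v_a$ and $v_b$ into the same {\distr}.

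Once (i)--(iii) are established, the set $\mathcal{M}_X$ is well-defined with the same combinatorics as before, and Lemmas~\ref{lem:gadget-correspondence}, \ref{lem:M-connected}, and \ref{lem:flip-recomb} transfer verbatim (their proofs only use that each {\distr} contains exactly the prescribed heavy vertices, not that $s=0$). Thus any recombination in $G$ either stays in $\mathcal{M}_X$ or moves to $\mathcal{M}_{X'}$ for an $X'$ differing from $X$ on a single edge, and conversely every NCL edge flip can be simulated by one or two recombinations. This yields a polynomial-time reduction from bipartite NCL reconfiguration to BR$(G,k,s)$, establishing PSPACE-hardness; membership in PSPACE is standard since a single recombination can be described in polynomial space and verified in polynomial time.

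The main obstacle I anticipate is the careful bookkeeping for Property (ii) in the OR case, where four heavy vertices of weight $\alpha$, one heavy vertex of weight $6\alpha - 3$, and the surrounding degree-2 gadgets interact; one must confirm that no alternative assignment of heavy vertices to {\distr s} fits in the window $[10\alpha - s,\, 10\alpha + s]$. Choosing $\alpha = s + 5$ gives a safety margin of at least $3$ in every relevant inequality (e.g.\ $6\alpha - 3 + \alpha = 7\alpha - 3 > 10\alpha + s$ fails, so $v_c$ cannot absorb an extra $\alpha$-weight neighbor, etc.), which I would enumerate case-by-case to rule out every non-canonical heavy-vertex assignment. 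Once that is done, the remainder of the argument is mechanical.
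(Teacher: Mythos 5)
Your proposal matches the paper's own proof: both rescale the weight parameter to $\alpha = s + 5$ (leaving the rest of the Section~\ref{sec:hardness-0} construction untouched) and then re-verify properties (iii), (ii), (i) in that order to show each {\distr} must still contain the same heavy vertices as in the zero-slack case, yielding $k = \Theta(n/s)$ and hence the stated ranges. The one detail you gloss over that the paper addresses explicitly is that with positive slack $e^+$ and $e^-$ need not lie in different {\distr s}, so the NCL orientation must be read off the membership of $e^+$ alone rather than from the $\{e^+,e^-\}$ split; aside from this, your case analysis and conclusions track the paper's argument.
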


\smallskip\noindent\textbf{Bounded-degree triangulation $G$.}
We now generalize Corollary~\ref{cor:hardness-slack} restricting $G$ to be a planar triangulation of constant maximum degree. 
As a consequence, our reduction proves that BCP is also NP-hard for this class of graphs generalizing the results in~\cite{WakabayashiCS07}.
The main technical ingredient of this section is the gadget shown in Figure~\ref{fig:3-connected}~(b) called \emph{filler gadget}.
It consists of 5 vertices labeled ``red'' inducing a maximal planar subgraph.
Each internal face $f$ of this induced subgraph is called a \emph{heavy face}, and is assigned a positive integer weight $w(f)$.
We assume that $w(f)$ is a multiple of 3.
Each heavy face represents a subgraph $G_f$ with $w(f)+3$ vertices whose outer face is $f$.
Figure~\ref{fig:3-connected}~(a) shows the recursive construction of $G_f$.
Then $G_f$ is the 4-connected maximal planar graph with 6 vertices (including the 3 vertices of $f$) containing a heavy face $f'$.
The face $f'$ is chosen so that it is vertex-disjoint from $f$ and $w(f') = w(f) - 3$.
The base case, a heavy face with zero weight, is simply an ordinary face.

}

\begin{figure}[h]
	\centering
	\includegraphics[width=\linewidth]{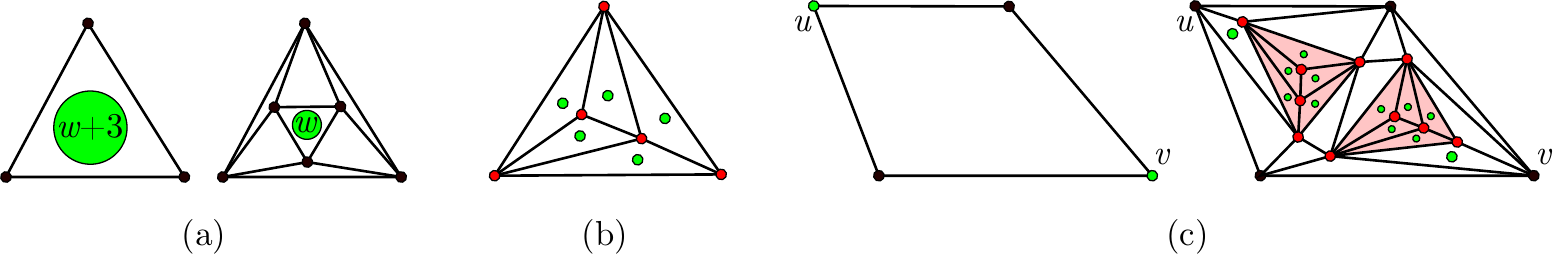}
	\caption{Construction of the filler gadget.}
	\label{fig:3-connected}
\end{figure}

\later{
We now describe how to modify $G$ produced in the reduction in Section~\ref{sec:hardness-0}.
We will add filler gadgets to the faces of $G$ and triangulate the resulting plane graph as follows.
We set the weight of every heavy face in the filler gadgets to $10 \alpha - s - 1$.
We proceed with the details; refer to Figure~\ref{fig:3-connected}~(c).
First, assign each heavy vertex to a face of $G$.
For each face $f$ assigned with one or more heavy vertices, add one copy of the filler gadget to the interior of $f$ for each heavy vertex.
Triangulate $f$ so that there is no new edge between vertices of $f$, i.e., all new edges have at least one endpoint at a filler gadget.
We also require that for each heavy vertex $v$ there is a face $f_v$ containing $v$ and one edge of the outer face of its corresponding filler gadget $F_v$.
We transfer the weight from $v$ to $f_v$ by making $v$ a light vertex and $f_v$ a heavy face, and setting $w(f_v) = 5s + w(v) - 1$.
For remaining faces $f$ of $G$, if $f$ is not a triangle and has not been assigned any heavy vertex, add a copy of the filler gadget in $f$ triangulating it as before, i.e., no new edge should be between vertices of $f$.
Choose a neighboring face $f'$ (sharing an edge with the gadget) to make if heavy with $w(f') = 5s$.
The resulting graph $G'$ is an edge-maximal planar graph.
Finally, set $\alpha = 11s + 5$ and increase $k$ by 5 for each copy of the filler gadget used.

It remains to define the initial and target $(k,s)$-BCPs $\Pi_A$ and $\Pi_B$ defining an instance of BR$(G,k,s)$.
We construct such $(k,s)$-BCPs with $0$ slack.
For that, we show that we can find a balanced connected partition of the filler gadget and $5s$ vertices in the neighbor heavy face into 5 connected {\distr s}.
We construct a Hamiltonian path of the gadget and its neighboring heavy face as follows.
The main structure of the Hamiltonian path is shown in Figure~\ref{fig:Ham-path}~(a).
For each heavy face we need to find a path that visits all internal vertices and starts and ends with a vertex on its outer face.
We accomplish this with the recursive construction shown in Figure~\ref{fig:Ham-path}~(b).
Considering that the path starts in the filler gadget, we ignore the first edge to obtain the desired Hamiltonian path.
It is now easy to partition the $10\alpha$-long prefix of the path into 5 connected components.
The remainder of the construction follows Section~\ref{sec:hardness-0}.

\begin{figure}[h]
	\centering
	\includegraphics[width=0.6\linewidth]{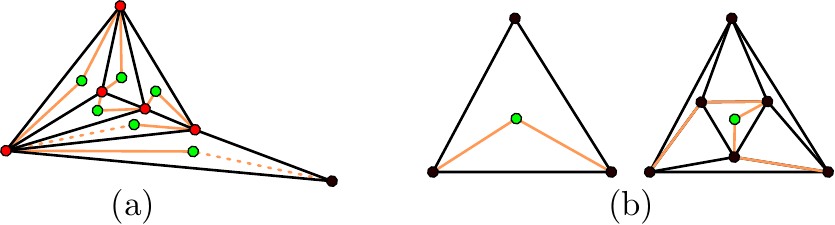}
	\caption{Inductive construction of the Hamiltonian path used to partition the filler gadget.}
	\label{fig:Ham-path}
\end{figure}

\begin{lemma}
	\label{lem:filler-gadget}
	In every $(k,s)$-BCP $\Pi$ of $G'$, for every filler gadget $F$, exactly five {\distr s} of $\Pi$ intersect $F$ and the number of vertices in the union of these five {\distr s} outside of $F$ is at most $10s$.
\end{lemma}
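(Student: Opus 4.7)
The plan is to split the lemma into two sub-claims: (A) every filler gadget $F$ is intersected by exactly five {\distr s}, and (B) the union of those {\distr s} has at most $10s$ vertices outside $F$. I will set up notation $R = \{r_1,\ldots,r_5\}$ for the red vertices of $F$ and $H_i$ for the interior of the $i$-th heavy face $f_i$, so that $|H_i| = w(f_i) = 10\alpha - s - 1$ and $|F| = 5 + 5(10\alpha - s - 1) = 50\alpha - 5s$.

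The ``at least five'' direction of (A) is a short counting argument: four {\distr s} have combined size at most $4(n/k+s) = 40\alpha + 4s$, which is strictly less than $|F| = 50\alpha-5s$ after substituting $\alpha = 11s+5$, so four {\distr s} cannot cover $F$. For the ``at most five'' direction, the key structural claim is that every {\distr} intersecting $F$ contains at least one red vertex; since $|R|=5$ and each red vertex belongs to exactly one {\distr}, the bound follows. To prove the claim, I will observe that deleting $R$ from $G'$ separates each $H_i$ both from the other $H_j$'s and from $V(G')\setminus F$, because by the inductive definition of $G_{f_i}$ its outer triangle coincides with the face $f_i$ and is thus contained in $R$. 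Hence a {\distr} $D$ with $D\cap R=\emptyset$ and $D\cap F\neq\emptyset$ must lie entirely inside a single $H_i$, giving $|D|\leq |H_i| = n/k-s-1$, which violates the lower slack bound $|D|\geq n/k-s$. Part (B) is then immediate: once we know exactly five {\distr s} intersect $F$, their total size is at most $5(n/k+s) = 50\alpha+5s$, and subtracting $|F| = 50\alpha-5s$ leaves at most $10s$ vertices of these {\distr s} outside $F$.

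The main obstacle is the structural observation highlighted above. It relies on two ingredients coming together: the tuning of $w(f) = 10\alpha - s - 1$ to be exactly one vertex below the minimum {\distr} size $n/k - s$, and the separation property that each heavy face interior is cut off from the rest of $G'$ only through its three red boundary vertices. The separation property itself is an easy induction on the recursive construction of $G_f$, since at each recursive step the three new vertices lie strictly inside the face $f$. Once both ingredients are in place, the red-vertex pigeonhole argument and the final accounting for the outside footprint are routine.
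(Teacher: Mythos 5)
Your proof is correct and follows essentially the same approach as the paper's: a counting argument (five districts are needed to cover $|F| = 50\alpha - 5s$ vertices), a pigeonhole argument on the five red vertices (each intersecting district must contain a red vertex because a heavy face holds only $10\alpha - s - 1 < n/k - s$ vertices and is separated from the rest of $G'$ by the red boundary), and a final size subtraction to get the $10s$ bound. The only difference is that you make explicit the separation property — that removing $R$ isolates each $H_i$ — which the paper leaves implicit in the phrase ``each heavy face has less than the minimum capacity of a \distr''; this is a welcome clarification but not a different route.
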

\begin{proof}
	First note that $n/k$ remains $10\alpha$.
	The number of vertices in $F$ is $5(10\alpha - s)$, thus it must intersect at least 5 {\distr s} of $\Pi$.
	Each {\distr} that intersects $F$ must contain a red vertex since each heavy face has less than the minimum capacity of a {\distr}.
	Since $F$ has only 5 red vertices, there must be exactly 5 {\distr s} intersecting $F$, each containing exactly one red vertex.
	The maximum number of vertices of these {\distr s} outside of $F$ is $5(10\alpha + s) - 5(10\alpha - s) = 10s$.\ifthenelse{\boolean{lncs}}{\hfill$\Box$}{}
\end{proof}

Lemma~\ref{lem:filler-gadget} allows us to mostly ignore filler gadgets.
We refer to the {\distr s} that intersect a filler gadget as a \emph{filler {\distr}}.
Recall that every heavy vertex $v$ in $G$ is now adjacent to a heavy face $f_v$ in $G'$.
Since $10s < w(f_v) < 10\alpha -s$, if a filler {\distr} contains $v$ then it would disconnect vertices of $f_v$ that cannot be in any {\distr}.
Then, the {\distr} that contains $v$ is not filler and it must also contain the $w(v) \pm 5s - 1$ vertices in $f_v$ that are not in filler {\distr s}.
That mimics the weights of heavy vertices in $G$ such as $v$ in a 3-connected graph $G'$ within a $\pm 5s$ accuracy.
One can check that the new choice of $\alpha$ accommodates this variations guaranteeing properties (i)--(iii).
We omit the details since the argument is fairly similar as before.
The situation in which filler {\distr s} contain normal vertices of $G$ would then be equivalent to the reduction in Section~\ref{sec:hardness-0} where some of the non-heavy vertices are deleted.
Intuitively, deleting vertices only makes it more difficult to satisfy the connectivity constraints.
We need to extend the equivalence to the NCL orientation to allow for when vertices $e^+$ corresponding to an edge $e$ are taken by filler {\distr s}.
That would correspond to a partial orientation where $e$ is not directed, i.e., it does not count for satisfying the constraints of neither its endpoints.
It is clear that such a variant of NCL remains PSPACE-complete since we still require that the constraints are satisfied at each vertex ignoring undirected edges and a sequence of operations in this setting can be converted by a sequence in the normal NCL setting and vice versa by assigning an arbitrary direction for an undirected edge.
For more details see an asynchronous vertion of NCL~\cite{Vig13}.
Then, Lemmas~\ref{lem:gadget-correspondence}--\ref{lem:flip-recomb} hold for this variant.
The construction distributes the weights of heavy vertices of $G$ into heavy faces of $G'$.
By the recursive construction, the degree of each vertex increases by at most 2.
Then, the filler gadget has constant degree.
Note that $k$ remains $O(n/s)$.

\,%\csaba{Extra paragraph, otherwise linno skips the previous paragraph.}
}
\both{
\begin{theorem}
	\label{thm-hard-3-con}
	BR$(G,k,s)$ is PSPACE-complete even if $G$ is maximal planar of constant maximum degree, $k \in O(n^\eps)$, and $s \in O(n^{1-\eps})$ for $0 < \eps \le 1$.
\end{theorem}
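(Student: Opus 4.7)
The plan is to combine the three ingredients that have been laid out earlier: the planar, slack-$0$ reduction from NCL (Lemma~\ref{lem:hardness}), the slack-parameter adjustment giving Corollary~\ref{cor:hardness-slack}, and the filler gadget that converts a planar graph with heavy faces into a triangulation of bounded maximum degree. Membership in PSPACE is standard since each $(k,s)$-BCP has a polynomial description and reconfiguration sequences can be explored in polynomial space; the substance is the hardness reduction, which starts from the bipartite NCL instance $(G_{NCL}',A',B')$ as before.

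First, I would take the graph $G$ produced by the reduction of Section~\ref{sec:hardness-0} and embed copies of the filler gadget of Figure~\ref{fig:3-connected} into its faces as described before Lemma~\ref{lem:filler-gadget}: one copy per heavy vertex (transferring the weight of each heavy vertex $v$ to an adjacent heavy face $f_v$ with $w(f_v)=5s+w(v)-1$) and additional copies inside the remaining non-triangular faces (with a neighboring heavy face of weight $5s$). Every heavy face is realized concretely by the recursive construction of Figure~\ref{fig:3-connected}(a), which blows up each heavy face into a $4$-connected planar triangulation of the appropriate size with only a bounded increase in vertex degree. Setting $\alpha = 11s+5$ and enlarging $k$ by $5$ per filler gadget keeps $n/k = 10\alpha$ and $s\in O(n^{1-\eps})$, $k\in O(n^{\eps})$ as required.

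Next, I would define $\Pi_A$ and $\Pi_B$ by taking the $(k,0)$-BCPs produced by the construction of Section~\ref{sec:hardness-0} and extending them to the filler gadgets; the Hamiltonian path construction of Figure~\ref{fig:Ham-path}, which threads through every filler gadget and the adjacent heavy face, cleanly partitions each such region into five contiguous $10\alpha$-sized pieces, so the extended partitions are genuine $(k,0)$-BCPs of $G'$ (hence also $(k,s)$-BCPs). The correctness of the reduction then reduces to two claims. The first is Lemma~\ref{lem:filler-gadget}: since every heavy face has weight strictly between $10s$ and $10\alpha-s$, no single district can swallow a heavy face entirely, and the five red vertices of each filler gadget are forced to lie in five distinct districts, each of which may absorb at most $10s$ vertices outside the gadget. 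The second is that, with the new choice of $\alpha$, properties (i)--(iii) from Section~\ref{sec:hardness-gen} continue to hold for all $(k,s)$-BCPs, so every district of a logic gadget still contains exactly the prescribed heavy vertices, and the membership of each $e^+$ in its district still encodes the orientation of the edge $e$.

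The main obstacle, and where I would spend most of the proof, is verifying that filler districts cannot corrupt the logic. Concretely, I would argue that if a filler district were to reach into a logic gadget and pick up an $e^+$ vertex, this only removes vertices from the instance, which corresponds to leaving the NCL edge $e$ undirected; since the asynchronous variant of NCL with undirected edges is also PSPACE-complete (cf.~\cite{Vig13}), this does not help the reconfiguration. Combined with the zero-slack analysis, this yields an equivalence between reconfiguration sequences in NCL and sequences of recombinations in $\mathcal{R}_s(G',k)$, via adaptations of Lemmas~\ref{lem:gadget-correspondence}--\ref{lem:flip-recomb}. Because the filler gadget has bounded degree by construction and the logic gadgets have degree $\Theta(\alpha)=\Theta(s)$ which can be spread out using additional filler padding into a constant-degree representation, the resulting $G'$ is an edge-maximal planar graph of constant maximum degree, completing the proof.
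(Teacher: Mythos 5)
Your proposal follows essentially the same route as the paper: start from the slack-$s$ NCL reduction, insert filler gadgets into the faces, transfer the weight of each heavy vertex $v$ to an adjacent heavy face $f_v$ with $w(f_v)=5s+w(v)-1$, set $\alpha=11s+5$ and add $5$ to $k$ per gadget, define $\Pi_A,\Pi_B$ via the Hamiltonian-path partition of each filler, invoke Lemma~\ref{lem:filler-gadget}, and argue that filler districts poaching $e^+$ vertices corresponds to the undirected-edge variant of NCL, which stays PSPACE-complete. All of this matches the paper's proof.

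The one place you go off track is the final sentence, where you assert that the logic gadgets still have degree $\Theta(\alpha)=\Theta(s)$ and propose to "spread this out using additional filler padding." This contradicts your own (correct) description of the construction: the whole point of transferring $w(v)$ to the heavy face $f_v$ and then realizing $f_v$ by the recursive construction of Figure~\ref{fig:3-connected}(a) is that $v$ becomes a light vertex, so the $w(v)-1$ pendant leaves that caused the $\Theta(\alpha)$ degree in the planar-but-not-triangulated reduction are gone. The recursive blow-up increases each vertex's degree by at most $2$ per level only locally, and the paper's claim is that after the transfer every vertex of $G'$ — including the formerly heavy gadget vertices — has constant degree. No extra padding step is needed or described, and it is not clear your vague "additional filler padding" would preserve the logic of the gadgets. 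This is a misunderstanding of why the degree bound holds rather than a flaw in the construction itself, but as written the last sentence would not survive scrutiny.
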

}

\smallskip\noindent\textbf{Finding Balanced Connected Partitions.}
The \emph{NCL orientation problem} is defined by an input undirected graph $G_{NCL}$ edge colored as before, and asks whether there exist an orientation of $G_{NCL}$ that satisfies the NCL constraints.
This problem in NP-complete~\cite{PSPACE-book}.
We remark that our construction implies the following theorem.

\begin{theorem}
	\label{bcp-hard-3-con}
	It is NP-complete to decide whether there exist a $(k,s)$-BCP of a graph $G$, even if $G$ is maximal planar of constant maximum degree, $k \in O(n^\eps)$, and $s \in O(n^{1-\eps})$ for $0 < \eps \le 1$.
\end{theorem}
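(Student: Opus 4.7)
The plan is to reduce from the NCL orientation problem, which is NP-complete~\cite{PSPACE-book}. Membership in NP will be immediate: given a candidate partition $\Pi$ of $V(G)$ into $k$ parts, one can verify in polynomial time that each part induces a connected subgraph and has size within $[n/k - s, n/k + s]$.

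For hardness, given an instance $G_{NCL}$ of NCL orientation, I would apply the construction of Theorem~\ref{thm-hard-3-con} unchanged to produce a maximal planar graph $G$ of constant maximum degree together with values of $k \in O(n^{\eps})$ and $s \in O(n^{1-\eps})$, but this time I simply output $G$ (no source/target partitions are needed). The key claim to establish is that $G$ admits a $(k,s)$-BCP if and only if $G_{NCL}$ admits a satisfying orientation.

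For the forward direction, given a satisfying orientation $X$ of $G_{NCL}$, I would build a partition in $\mathcal{M}_X$ and extend it by the Hamiltonian-path partition of each filler gadget described in Section~\ref{sec:hardness-gen}; the analysis there shows that the resulting partition is a $(k,s)$-BCP. For the backward direction, starting from an arbitrary $(k,s)$-BCP $\Pi$ of $G$, I would first invoke Lemma~\ref{lem:filler-gadget} to pin down the five filler districts in each filler gadget, and then use the heavy-face weight calibration to enforce the three structural properties (i)--(iii) from Section~\ref{sec:hardness-gen} at every AND, OR, and degree-2 gadget. This will let me read off a (possibly partial) orientation $X$ of $G_{NCL}$ from $\Pi$; the connectivity of the district containing $\{v_a, v_b, v_c\}$ at each AND/OR vertex, in direct analogy with Lemma~\ref{lem:gadget-correspondence}, will imply that every NCL constraint is satisfied. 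Any edge left undirected because filler districts absorbed its $e^+/e^-$ vertices can be oriented arbitrarily without breaking satisfaction, as already noted in the discussion of the asynchronous NCL variant in Section~\ref{sec:hardness-gen}.

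The main obstacle will be the backward direction, specifically confirming that properties (i)--(iii) remain forced despite filler districts potentially absorbing up to $10s$ vertices per gadget neighborhood. The hope is that the weight calibration $w(f_v) = 5s + w(v) - 1$ and the choice $\alpha = 11s + 5$ inherited from Theorem~\ref{thm-hard-3-con} provide exactly this robustness: the three canonical heavy vertices $v_a, v_b, v_c$ of each AND/OR gadget (and the pair $v_a, v_b$ of each degree-2 gadget) must land in a common district regardless of filler intrusions, which is what powers the gadget-to-NCL-constraint translation and hence both directions of the reduction. Once these properties are secured the rest of the argument is essentially bookkeeping already done in Section~\ref{sec:hardness-gen}.
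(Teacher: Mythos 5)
Your proposal is correct and takes essentially the same approach as the paper, which presents this theorem as an immediate corollary of the construction behind Theorem~\ref{thm-hard-3-con} without giving a detailed argument; your write-up in fact spells out more of the reasoning (NP membership, the forward direction via the Hamiltonian-path partition of the filler gadgets, the backward direction via Lemma~\ref{lem:filler-gadget} and properties (i)--(iii) yielding a readable partial orientation) than the paper does.
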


\ifthenelse{\boolean{lncs}}{
\smallskip\noindent\textbf{Constant number of {\distr s}.}
The drawback of the previous construction is that it requires 5 new districts for each filler gadget. Here we obtain PSPACE-hardness with $k=3$, but we lose the restriction that $G$ is a triangulation, and instead we only require that $G$ is a bounded-degree planar graph.
The main technical difficulty is to guarantee that the same subset of heavy vertices is always contained in the same {\distr}. This property is obtained by a careful setting of the weights so that there is a unique partition of weights of the heavy vertices that allow for the 3 {\distr s} to be balanced within $s$ slack.
This allow us to label the districts according to the heavy vertices that is contains.
One of the districts then locally acts like the {\distr s} that previously contained $v_a$, $v_b$ and $v_3$ in each AND and OR gadgets, maintaining the equivalency between the conectedness of this district and the NCL constraints.
}{}

\later{
\smallskip\noindent\textbf{Constant number of {\distr s}.}
Note that, in the previous hardness proofs, $k$ cannot be a constant.
We now build on Lemma~\ref{lem:hardness} to prove PSPACE-hardness for instances of BR$(G,3,s)$.
Note that with $k = 2$ the configuration space is the complete graph by definition.
Also note that this result does not subsume Theorem~\ref{thm-hard-3-con} since the graph produced has cut-vertices as we will see in the remainder of this section.
We now show how to modify the graph $G$ produced in the reduction of Section~\ref{sec:hardness-0}.
We modify the gadgets as follows.
The AND gadget remains unchanged while the OR gadget adjacent to edges $a$, $b$ and $c$ as in Figure~\ref{fig:constant-districts-planar}~(a) is changed by  adding the edges $a^+v_a'$, $a^+v_b'$, $b^+v_b'$, $b^+v_c'$, $c^+v_a'$, and $c^+v_c'$.
The vertices $v_a'$, $v_b'$ and $v_c'$ are no longer heavy vertices.
The main difference is in the degree-2 gadgets which are shown in Figure~\ref{fig:constant-districts-planar}~(b).
They now have six heavy vertices each, $v_a$, $v_a'$, $v_{ab}$, $v_{ab}'$, $v_b$, $v_b'$.
Refer to Figure~\ref{fig:constant-districts-planar}~(c)--(d).
For each face $f$ of $G_{NCL}$ we create a new heavy vertex $v_f$ in $G$ which concludes the description of $V(G)$.
We now describe how to complete $E(G)$.
For that we have to describe two spanning trees of a new graph $G_{NCL}^*$ described as follows.
We build $G_{NCL}^*$ from $G_{NCL}$ by adding $v_f$ to the face $f$, triangulating the result by adding edges incident to $v_f$, and deleting every original edge of $G_{NCL}$.
Refer to Figure~\ref{fig:constant-districts-planar}~(c).
Note that $G_{NCL}^*$ is bipartite, there is a one-to-one correspondence between degree-2 vertices of $G_{NCL}'$ and faces of $G_{NCL}^*$, and that such faces are quadrilaterals.
Let $T_1$ be a spanning tree of $G_{NCL}^*$ and $T_2$ be an interdigitating spanning tree of the dual of $G_{NCL}^*$, i.e., spanning all degree-2 vertices of $G_{NCL}'$ so that $T_1$ and $T_2$ do not both use the the primal or dual of the same edge.
For every edge $u v_f$ in $T_1$, add an edge $u_a v_f$, where $a$ is an edge incident to $u$ in $G_{NCL}'$ and $a$ appears before $u$ in a clockwise traversal of $f$.
For every edge $u v$ in $T_2$, add the edge $u_{ab}'v_{cd}$ or $u_{ab} v_{cd}'$ where $a$ and $b$ (resp., $c$ and $d$) are the edges incident to $u$ (resp., $v$) and the addition of the edge does not introduce a crossing in $G$.
Then, for every pair of vertices of the form $v_a$ or $v_a'$ where $a$ is an edge incident to a degree-2 vertex $v$ in $G_{NCL}'$ that are currently in the same face of $G$, add an edge between them excluding edges of the form $v_a v_a'$, i.e., an edge inside a face of a degree-2 vertex.
Note that the new edges could be of the form $v_a v_b$, i.e., connecting two vertices in the same degree-2 vertex but corresponding to a different edge of $G_{NCL}'$ as shown in Figure~\ref{fig:constant-districts-planar}~(d) by curved edges.
Finally, delete one of the edges added in the last step.

We now give some informal intuition about the construction.
We will set up the weights so that heavy vertices can only be part of the same {\distr s} which we label $V_1$, $V_2$ and $V_3$ in any $(k,s)$-BCP of $G$.
The following is the key property that we exploit.

\begin{enumerate}[label=$(\star)$]
	\item \label{prop:heavyfixed}
	{\Distr} $V_1$ must contain every vertex $v_f$ created from a face $f$ of $G_{NCL}$ as well as every $v_a$, $v_b$ and $v_c$ of a gadget corresponding to a degree-3 vertex $v$ of $G_{NCL}'$.
	{\Distr} $V_2$ must contain all vertices of degree-2 gadgets of the form $v_{ab}$ or $v_{ab}'$.
	{\Distr} $V_3$ must contain all remaining heavy vertices, i.e., heavy vertices of degree-2 gadgets of the form $v_{a}$ and heavy vertices $v'$ of gadgets corresponding to OR vertices $v$.
\end{enumerate}

The connectivity of $V_1$ ($V_2$) will mimic, in broad terms, the spanning tree $T_1$ ($T_2$).
{\Distr} $V_3$ will separate $V_1$ and $V_2$ also containing the heavy vertex $v'$ of OR gadgets.
We now encode the direction of an edge $e$ in $G_{NCL}'$ by whether $V_1$ or $V_3$ contain $e^+$ respectively meaning that $e$ points to its degree-3 or degree-2 end.

We now adjust the weights of the heavy vertices of $G$ in order to obtain property~\ref{prop:heavyfixed}.
Let $n_2$, $n_{OR}$ and $n_{AND}$ be the number of degree-2, OR and AND vertices in $G_{NCL}'$ respectively.
Set the weights of $V_1$'s heavy vertices to $w_1 = \alpha$, which we will set later.
Choose one of the heavy vertices assigned to $V_2$ and set its weight to $n/3 + s - w_1$, and set the weights of its remaining heavy vertices to $w_2 = (w_1 - s - n_2) / (2 n_2 - 1)$.
Choose one of the heavy vertices assigned to $V_3$ and set its weight to $n/3 + s - w2$, and set the weights of its remaining heavy vertices to $w_3 = (w_2 - s - 3 n_2 - n_{OR}) / (4 n_2 + n_{OR} - 1)$.
We set $\alpha$ so that $w_3 = s+n_\ell$ where $n_\ell$ is the number of light vertices in $G$.
Then $\alpha \in O(n_2^2 s)$ and $|V(G)| \in O(\alpha |V(G_{NCL})|)$.

\begin{figure}[h]
	\centering
	\ifthenelse{\boolean{lncs}}
	{\includegraphics[width=\linewidth]{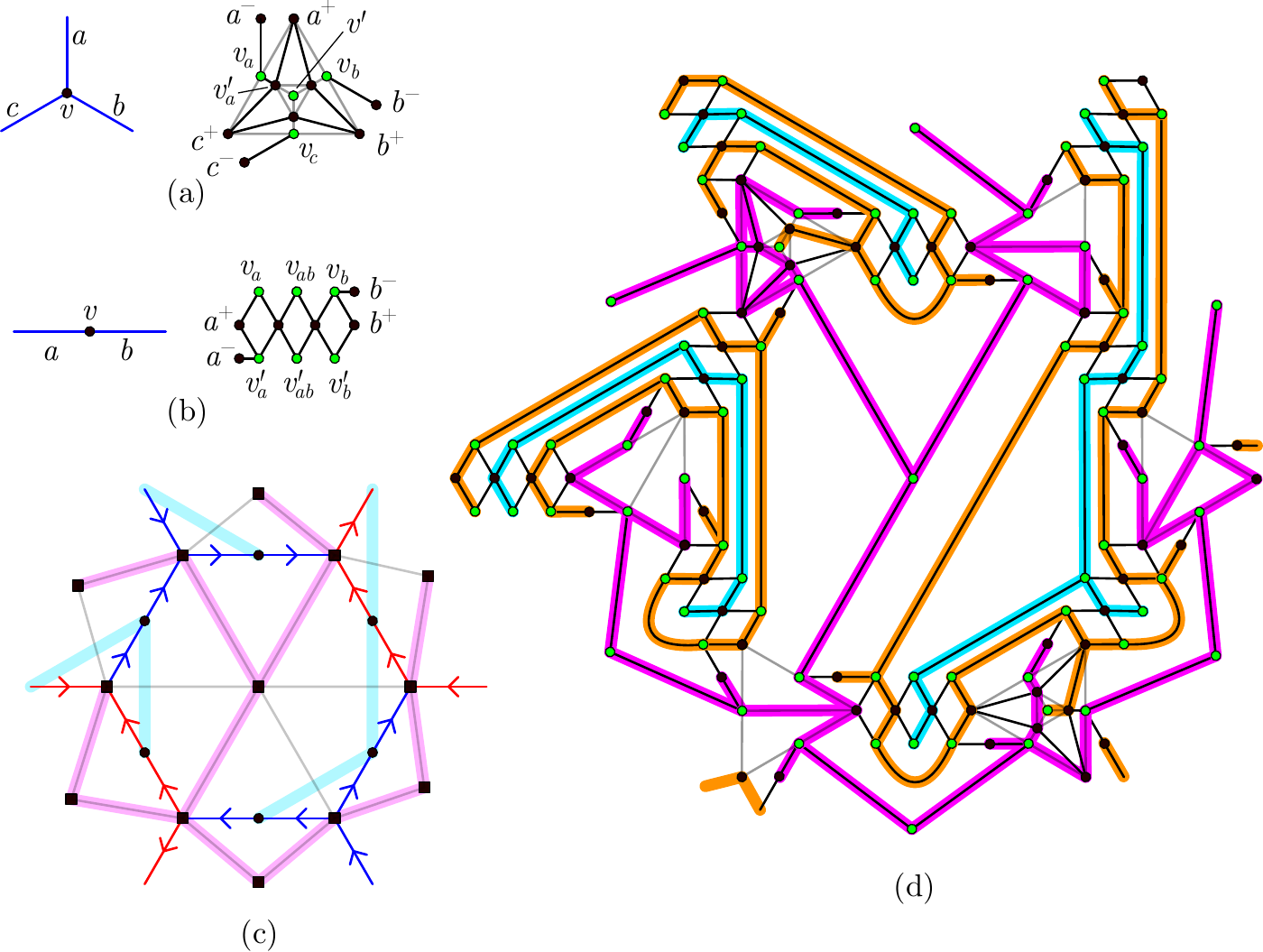}}
	{\includegraphics[width=0.9\linewidth]{constant-districts-planar.pdf}}
	\caption{Construction of the reduction with $k=3$. The three {\distr s} $v_1$, $v_2$, and $V_3$ are colored magenta, cyan, and orange respectively. (a) New OR gadget. (b) New degree-2 gadget. (c) $G_{NCL}^*$ overlapped with $G_{NCL}'$. The edges of $G_{NCL}^*$ are shown in gray and its vertices are shown as squares. (d) The corespondent part of $G$.}
	\label{fig:constant-districts-planar}
\end{figure}
}
\both{
\begin{theorem}
	\label{thm:hardness-constant-districts-planar}
	BR$(G,3,s)$ is PSPACE-complete even if $G$ is  planar with constant maximum degree, and $s \in O(n^{1-\eps})$ for $0 < \eps \le 1$.
\end{theorem}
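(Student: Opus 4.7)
The plan is to reduce from NCL reconfiguration using the graph $G$ constructed above with modified OR and degree-2 gadgets, face-vertices $v_f$, and edges derived from the spanning trees $T_1$ of $G_{NCL}^*$ and $T_2$ of its dual. I would first verify property~\ref{prop:heavyfixed}: every $(3,s)$-BCP of $G$ partitions the heavy vertices into the three groups indicated. The weight scheme is calibrated so that the total weight of heavy vertices assigned to $V_i$ is at least $n/3-s$ and at most $n/3+s$ only for the intended assignment. Each of the three districts contains one ``anchor'' heavy vertex of weight $n/3+s-w_j$ (for some $j$), so that once an anchor is placed in a district, the remaining weight budget up to $n/3+s$ is just a narrow window of size roughly $2s$. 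The weights $w_1>w_2>w_3$ are then chosen so that reassigning even one heavy vertex between groups overshoots or undershoots by more than $s$ in some district, and the residual budget left for light vertices is exactly $O(n_\ell)$, the total count of light vertices. A counting argument over the $3$ possible assignments of each anchor then rules out any alternative, forcing the labeling $V_1,V_2,V_3$ described by~\ref{prop:heavyfixed}.

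Next I would establish connectivity. Because $V_1$ must contain every face-vertex $v_f$ together with every $v_a,v_b,v_c$ in a degree-$3$ gadget, and $V_2$ must contain every $v_{ab}$ and $v_{ab}'$ in a degree-$2$ gadget, I would show that $G[V_1]$ is connected iff the edges of $G$ corresponding to $T_1$ are all present in $V_1$, and symmetrically that $G[V_2]$ must ``use'' the edges corresponding to $T_2$; the interdigitating property of $T_1$ and $T_2$ guarantees that these two requirements can be satisfied simultaneously without conflict. The district $V_3$ separates $V_1$ from $V_2$ along the remaining boundary edges, and contains the OR-selector vertices $v'$ together with the degree-2 selectors $v_a,v_a'$, so its connectivity is witnessed through the subgraph induced by the non-$T_1\cup T_2$ edges of $G^*_{NCL}$ together with the new edges added inside faces of degree-2 gadgets.

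Then I would recover the NCL equivalence. The orientation of each edge $e\in E(G_{NCL}')$ is encoded by whether $e^+\in V_1$ (pointing to the degree-3 end) or $e^+\in V_3$ (pointing to the degree-2 end). Exactly as in Lemmas~\ref{lem:gadget-correspondence}--\ref{lem:flip-recomb}, the AND constraint is enforced by the fact that $V_1$ must reach $v_a,v_b,v_c$ through the $a^+/b^+/c^+$ vertices, and the modified OR gadget's added edges $a^+v_a',a^+v_b',b^+v_b',b^+v_c',c^+v_a',c^+v_c'$ allow the district containing $v'$ to connect to its unique selector $v_*'$ from any of three ``on'' inputs, mirroring the OR behavior. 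I would then argue that a single NCL edge reversal corresponds to at most two recombinations (first adjusting the $v'$-selector in an OR gadget if needed, then swapping $e^+$ between the two adjacent gadget-districts), and conversely that any recombination in $\mathcal{R}_s(G,3)$ either is such a simulated flip or is internal to an OR gadget, so it preserves the underlying NCL orientation.

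The main obstacle is the weight-forcing step. Since $k=3$ is a constant, there is no filler-gadget-style decoupling available, and every heavy vertex's placement affects the global balance; one must verify that the counts $n_2, n_{OR}, n_{AND}$ combined with the chosen $w_1,w_2,w_3,\alpha$ admit a unique balanced assignment even when light vertices of the AND/OR/degree-2 gadgets are redistributed, which requires a careful case analysis of how many heavy vertices of each type can be swapped between districts. Once this is in hand, PSPACE membership is standard (nondeterministically guess the recombination sequence using polynomial space), and PSPACE-hardness follows from the correspondence with the bipartite NCL reconfiguration problem of~\cite{PSPACE}. The bounds $s\in O(n^{1-\eps})$ and the constant maximum degree follow because $\alpha\in O(n_2^2 s)$ keeps $|V(G)|$ polynomial in $|V(G_{NCL})|$ and $s$, and because every new vertex in the construction (including $v_f$) is incident to only a bounded number of gadget edges after the triangulation of $G_{NCL}^*$.
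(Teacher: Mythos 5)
Your proposal follows the same high-level route as the paper: the same modified gadgets, face-vertices $v_f$, interdigitating spanning trees $T_1,T_2$ of $G_{NCL}^*$ and its dual, the weight calibration to force property~$(\star)$, and the encoding of each edge orientation by whether $e^+$ lies in $V_1$ or $V_3$. Your sketch of the weight-forcing step is close but slightly off in its bookkeeping: the construction places only two anchors (one in $V_2$ of weight $n/3+s-w_1$, one in $V_3$ of weight $n/3+s-w_2$), not one in each of the three districts, and the paper's argument is a direct cascade (show $V_3$ is forced using the observation that its remaining budget below $n/3+s$ cannot even be reached with all light vertices unless all the $w_3$-weight heavy vertices are included, then propagate to $V_2$ and $V_1$), rather than a counting argument over anchor placements.

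The genuine gap is in your converse direction. You assert that ``any recombination in $\mathcal{R}_s(G,3)$ either is such a simulated flip or is internal to an OR gadget, so it preserves the underlying NCL orientation,'' but with $k=3$ the districts $V_1$ and $V_3$ are globally adjacent at \emph{every} gadget, so a single recombination between them can in principle reverse many NCL edges at once, breaking the one-recombination-to-one-move correspondence. This is exactly the ``potential worry'' the paper flags, and resolving it requires two observations you omit: first, because $V_2$ is forced to connect $v_{ab}$ and $v_{ab}'$ inside each degree-2 gadget, a single recombination cannot flip both edges incident to the same degree-2 vertex, so the set of simultaneously flipped edges induces paths of length at most two with a degree-3 vertex in the middle; second, such a simultaneous flip can be serialized into individual reversals, appealing to the asynchronous/free-move variant of NCL (Viglietta), so the simulated move sequence still certifies a valid NCL reconfiguration. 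Without this argument, the forward direction of the reduction (recombinations $\Rightarrow$ NCL moves) is not established.
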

}

\later{
\begin{proof}
	As before, each individual weight of a heavy vertex is smaller than the minimum threshold for a {\distr} and thus a {\distr} must contain all vertices in the heavy vertex.
	We can implement the weights of a heavy vertex $v$ with a path of length $w(v)$ with an endpoint at $v$ in order to keep the maximum degree of $G$ upper bounded by a constant.
	We can assume that $G_{NCL}$ has constant face degree, or else one can use techniques by Hearn and Demaine~\cite{PSPACE-book} to split large faces using their ``red-blue conversion" gadget.
	Thus, we keep the maximum degree of $G$ upper bounded by a constant.
	We first show that property \ref{prop:heavyfixed} holds in any $(k,s)$-BCP $\Pi$ on $G$.
	By construction, we have that $w_1 > w_2 > w_3$.
	We first show the claim for $V_3$.
	Assume that $V_3$ is the {\distr} containing the heavy vertex of wight $n/3 + s - w2$.
	For contradiction assume that it also contain a heavy vertex whose weight is not $w_3$.
	Then such vertex must have a weight larger than $w_2$ which makes the size of $V_3$ greater than $n/3 + s$, a contradiction.
	Now assume for contradiction that $V_3$ does not contain all of the heavy vertices with weight $w_3$.
	Then the maximum size of $V_3$ is $n/3 - w_3 + n_\ell = n/3 - s$ even if $V_3$ contains all light vertices, a contradiction since $V_1$ and $V_2$ need to contain light vertices to be connected.
	We conclude that $V_3$ must satisfy \ref{prop:heavyfixed}.
	The arguments for $V_2$ are similar once \ref{prop:heavyfixed} is established for $V_3$, and thus $V_1$ also satisfies \ref{prop:heavyfixed}.
	
	We now show that Lemma~\ref{lem:gadget-correspondence} holds in this context.
	By \ref{prop:heavyfixed}, $V_1$ must connect all $v_a$, $ v_b $ and $ v_c $ in every AND and OR gadget. Moreover, apart from such vertices, $V_1$ can only contain light vertices and heavy vertices of the form $v_f$ that came from a face $f$ of $G_{NCL}$.
	Thus the only paths that connect $v_a$, $ v_b $ and $ v_c $ in a given gadget are contained in the gadget itself.
	Therefore, the NCL constraints at degree-3 gadgets must be satisfied as before.
	Note that the new edges in the OR gadget do not change its behavior.
	It remains to show the same for degree-2 gadgets.
	By \ref{prop:heavyfixed}, $V_2$ must connect all $v_{ab}$ and $ v_{ab}' $ in every degree-2 gadget.
	By construction and Property~\ref{prop:heavyfixed}, all paths that connect such vertices must be contained in the same gadget.
	Similarly, $V_3$ must connect $v_{a}$ and $v_a'$ ($v_{b}$ and $v_b'$) by a path contained in the degree-2 gadget.
	Then, at least one vertex in $\{a^+, b^+\}$ must be in $V_3$ to allow space for $V_2$ to connect $v_{ab}$ and $ v_{ab}'$.
	
	To conclude the proof it is enough to show that Lemma~\ref{lem:flip-recomb} holds in this context.
	By the way we encode the orientation of edges, only a recombination between $V_1$ and $V_3$ can change the orientation of an edge.
	The potential worry is that, now that $V_1$ and $V_3$ are adjacent at every gadget, one recombination could alter the orientation so that we can't transform it into a sequence of single orientation reversals in $G_{NCL}$.
	However, the presence of $V_2$ in every degree-2 vertex prevents that the orientation of two edges incident to the same degree-2 vertex flip simultaneously.
	Then the set of edges that flip in one recombination induces paths of at most two edges having a degree-3 vertex as a middle vertex.
	Then we can represent the simultaneous flips with a sequence of flips having first all flips that point an edge towards a degree-3 vertex followed by all flips that point an edge towards a degree-2 vertex.
	Each flip in the two subsequences is independent of other flips in the same subsequence.
	As proved by Viglietta~\cite{Vig13}, even though one operation can flip many independent edges simultaneously, the problem is still equivalent to regular NCL.
	In the other direction, we can show that an edge flip can always be obtained by at most 2 recombinations.
	The first recombination is between $V_2$ and $V_3$ so that $V_2$ contains only the appropriate path between each $v_{ab}$ and $ v_{ab}'$. The second recombination is between $V_1$ and $V_3$, exchanging the membership of $a^+$ and $a^-$ representing the reversal of an edge $a$.\ifthenelse{\boolean{lncs}}{\hfill$\Box$}{}
\end{proof}
}

\section{Conclusion and Open Problems}
\label{sec:conclusion}

We have shown that the configuration space $\mathcal{R}_s(G,k)$ of $(k,s)$-BCPs is connected when $G$ is connected and $s=\infty$, or when $G$ is Hamiltonian and $s\geq |V(G)|/k$. 
We hope that our results inform future research on the properties of $G$, $k$, and $s$ that are sufficient to obtain an efficient sampling of $\Bal_s(G,k)$.
We also leave it as an open problem whether our results in Section~\ref{sec:hamiltonian}  generalize to other classes of graphs. We conjecture that the configuration space $\mathcal{R}_s(n,k)$ is connected for every \emph{biconnected} graph $G$ on $n$ vertices when $s\geq n/k$. However, our techniques do not directly generalize; it is unclear how to extend the notion of canonical $k$-partitions in the absence of a Hamilton cycle.

We have shown that BR$(G,k,s)$ is PSPACE-complete even in specific settings that are of interest in applications such as sampling electoral maps.
Our results imply that the configuration space $\mathcal{R}_s(G,k)$ has diameter exponential in $n$, %even in these restricted settings, 
establishing as well an exponential lower bound on the mixing time of a Markov chain on $\mathcal{R}_s(G,k)$ for these settings.
We note that Theorems~\ref{thm-hard-3-con} and \ref{thm:hardness-constant-districts-planar} do not include other settings of interest such as when $G$ is maximal planar (or even 3-connected) and $k$ is a constant.
We leave these as open problems.

\bibliographystyle{plainurl}
\bibliography{redistricting}

\ifthenelse{\boolean{lncs}}{
\appendix
\magicappendix
}{}

\end{document}